\begin{document}

\newcommand{\blue}[1]{{\color{blue}{#1}}}





\newcommand{\ssection}{\section}

\newcommand{\note}[1]{\marginpar[\hfill{\tiny #1}]{\tiny #1}}

\newtheorem{lm}{Lemma}[section]
\newtheorem{thm}[lm]{Theorem}
\newtheorem{prp}[lm]{Proposition}
\newtheorem{cor}[lm]{Corollary}
\newtheorem{rmk}[lm]{Remark}
\newtheorem{ex}[lm]{Example}
\newtheorem{df}[lm]{Definition}
\newtheorem{prob}{Problem}
\newtheorem{conj}[lm]{Conjecture}
\newtheorem{fact}[lm]{Fact}

\newcommand{\myqed}{\hfill$\Box$}

\newcommand{\wyj}[1]{\hfill{\small #1}}

\newcounter{senumi}[section]
\newcounter{senumip}[section]
\newcounter{temp}[section]

\def\thesenumi{\thesection.\arabic{senumip}}
\def\p@senumip\thesenumip{\thesenumi}
\newenvironment{senumerate}%
    {\begin{list}%
        {(\thesenumi)}%
        {\usecounter{senumip}}
        \setcounter{senumip}{\value{temp}}
    }%
    {\setcounter{temp}{\value{senumip}}
     \end{list}}
\newcounter{penumi}[section]
\newcounter{penumip}[section]
\newcounter{ptemp}[section]

\def\thepenumi{\arabic{penumi}}
\newenvironment{penumerate}%
   {\begin{enumerate}%
        \setcounter{penumi}{\value{ptemp}}%
        \setcounter{enumi}{\value{ptemp}}%
    }%
    {\setcounter{ptemp}{\value{enumi}}
     \end{enumerate}}
\newcounter{ppenumi}[section]
\newcounter{ppenumip}[section]
\newcounter{pptemp}[section]

\def\theppenumi{\theptemp.\arabic{ppenumi}}
\newenvironment{ppenumerate}%
    {\begin{list}%
        {(\theppenumi)}%
        {\usecounter{ppenumi}\setlength{\rightmargin}{\leftmargin}}
        \setcounter{ppenumi}{\value{pptemp}}
    }%
    {\setcounter{pptemp}{\value{ppenumi}}
     \end{list}}

\newcounter{entmp}

\newenvironment{tenumerate}{\begin{enumerate}}{\end{enumerate}}


\newcommand{\m}[1]{{\uppercase {\mathbf{#1}}}}
\newcommand{\rel}[1]{{\uppercase {\mathbb{#1}}}}
\newcommand{\mrel}[2]{{\uppercase{\mathbf{#1}}}\!\left[\uppercase{\mathbb{#2}}\right]}

\newcommand{\ceqv}[1]{\ensuremath{\operatorname{\textsc{\upshape{Ceqv}}}
                                \ifthenelse{\equal{#1}{}}{}{\!\left( { #1} \right)}}}
\newcommand{\csat}[1]{\ensuremath{\operatorname{\textsc{\upshape{Csat}}}
                                \ifthenelse{\equal{#1}{}}{}{\!\left( { #1} \right)}}}
\newcommand{\scsat}[1]{\ensuremath{\operatorname{\textsc{\upshape{SCsat}}}
                                \ifthenelse{\equal{#1}{}}{}{\!\left( { #1} \right)}}}
\newcommand{\Csat}[1]{\ensuremath{\operatorname{\textsc{\upshape{Csat}}}
                                \ifthenelse{\equal{#1}{}}{}{\!\left( {\m #1} \right)}}}
\newcommand{\Ceqv}[1]{\ensuremath{\operatorname{\textsc{\upshape{Ceqv}}}
                                \ifthenelse{\equal{#1}{}}{}{\!\left( {\m #1} \right)}}}
\newcommand{\mcsat}[1]{\ensuremath{\operatorname{\textsc{\upshape{MCsat}}}
                                \ifthenelse{\equal{#1}{}}{}{\!\left( {\m #1} \right)}}}
\newcommand{\SCsat}[1]{\ensuremath{\operatorname{\textsc{\upshape{SCsat}}}
                                \ifthenelse{\equal{#1}{}}{}{\!\left( {\m #1} \right)}}}

\newcommand{\csp}[1]{\ensuremath{\operatorname{\textsc{\upshape{CSP}}}
                                \ifthenelse{\equal{#1}{}}{}{\!\left( {#1} \right)}}}

\newcommand{\polsatstar}{\csat}
\newcommand{\cpolsatstar}{\csat}

\newcommand{\npc}{\textsf{NP}-complete\xspace}
\newcommand{\conpc}{\textsf{co-NP}-complete\xspace}
\newcommand{\np}{\textsf{NP}\xspace}
\newcommand{\ptime}{\textsf{P}\xspace}
\newcommand{\pspace}{\textsf{PSPACE}\xspace}
\newcommand{\usp}{Uniform Solution Property\xspace}
\newcommand{\USP}{USP\xspace}

\newcommand{\vr}[1]{{\uppercase {\mathcal {#1}}}}

\newcommand{\set}[1]{{\left\{ {#1} \right\} }}
\newcommand{\ci}{\subseteq}
\newcommand{\co}{\supseteq}

\newcommand{\card}[1]{\left| #1 \right|}
\newcommand{\cardd}[1]{\# #1}
\newcommand{\equa}[1]{\left\| #1 \right\|}
\newcommand{\tup}[1]{\langle #1 \rangle}

\newcommand{\intv}[2]{I\left[#1,#2\right]}
\newcommand{\vpair}[2]{{{#1}\choose{#2}}}

\renewcommand{\leq}{\leqslant}
\renewcommand{\geq}{\geqslant}
\renewcommand{\le}[1]{\leqslant_{#1}}
\renewcommand{\ge}[1]{\geqslant_{#1}}

\newcommand{\comp}{\leq\!\geq}
\newcommand{\scomp}{<\!>}

\newcommand{\dist}[2]{{\sf dist}\!\left( #1,#2 \right)}
\newcommand{\distt}[3]{{\sf dist}_{#3}\!\left( #1,#2 \right)}

\renewcommand{\mapsto}{\longmapsto}
\newcommand{\tomaps}{\longmapsfrom}

\newcommand{\join}{\vee}
\newcommand{\meet}{\wedge}
\newcommand{\jjoin}{\bigvee}
\newcommand{\mmeet}{\bigwedge}
\newcommand {\bc}[1]{{\overline {#1}} }

\newcommand{\con}[1]{{\sf Con\:\m{#1}}}
\newcommand{\cn}[1]{{\sf Con\:\m{#1}}}
\newcommand{\Cn}[1]{{{\sf Con\:}\m{#1}}}
\newcommand{\cg}[3]{{\rm Cg}^{{\m {#1}}}({#2},{#3})}
\newcommand{\cgt}[3]{{\rm Cg}^{{\m {#1}}}({#2} \times {#3})}

\newcommand{\Pw}[1]{{\mathcal P} ({#1})}
\newcommand{\Pwp}[1]{{\mathcal P}^+ ({#1})}

\newcommand{\pol}[1]{{\rm Pol\:\m #1}}
\newcommand{\Pol}[1]{{\rm Pol( #1)}}
\newcommand{\poln}[2]{{\rm Pol}_{#1}\m #2}
\newcommand{\clo}[1]{{\rm Clo\:\m #1}}

\renewcommand{\d}{\po D}
\newcommand{\q}{\po Q}

\newcommand{\po}[1]{{\mathbf{#1}}}
\newcommand{\te}[1]{{\mathbf{#1}}}
\newcommand{\tn} [1]{{\mathbf{#1}}}
\newcommand{\typ}{{\rm typ}}
\newcommand{\typset}[1]{\typ\set{#1}}
\newcommand{\rst}[2]{ {#1} |_{{#2}} }
\newcommand{\charr}{{\rm char}}
\newcommand{\charrset}[1]{\charr\set{#1}}

\newcommand{\prect}[1]{\prec_{\tn #1}}

\newcommand{\minim}[3]{M_{\m #1}\left(#2,#3\right)}

\newcounter{note}
\newcounter{claim}

\renewcommand{\o}[1]{\overline {#1}}
\newcounter{ttable}
\newcommand{\ttable}{\refstepcounter{ttable}{Table \arabic{ttable}:\ \ }}

\newcommand{\centr}[2]{\left(#2 : #1\right)}
\newcommand{\comm}[2]{\left[ #1 , #2 \right]}
\newcommand{\com}[2]{\left[ #1 , #2 \right]}
\newcommand{\commm}[2]{\left[ #1 , \ldots, #2 \right]}
\newcommand{\commmk}[3]{\left[ #1 , \ldots, #2 \right]_{#3}}

\newcommand{\efdef}{{\rm ($\star$)}}

\newcommand{\map}{\longrightarrow}

\newcommand{\congruent}[1]{\stackrel{#1}{\equiv}}

\newcommand{\h}[1]{\widehat{#1}}
\newcommand{\oo}[1]{\overrightarrow{#1}}

\newcommand{\hsp}{{\sf {HSP}}}
\newcommand{\hs}{{\sf {HS}}}

\newcommand{\fj}{\varphi}
\newcommand{\epsi}{\varepsilon}

\newcommand{\dpp}[1]{\m D[{#1}]}
\newcommand{\dhh}{\dpp{p_1,\ldots,p_h}}
\newcommand{\dpq}{\m D[p,q]}
\newcommand{\dd}{\m D}

\newcommand{\cs}{\card{\Gamma}} 
\newcommand{\pand}{\textsf{AND}}
\newcommand{\por}{\textsf{OR}}
\newcommand{\pnot}{\textsf{NOT}}
\newcommand{\cmod}{\textsf{MOD}}
\newcommand{\bb}{\textsf{b}}
\newcommand{\pcnf}{\textsf{CNF}}
\newcommand{\pv}{\textsf{V}}
\newcommand{\emptys}{\emptyset}
\newcommand{\dstar}{{\star\star}}
\newcommand{\pstar}{{\dagger}}
\newtheorem*{conjj}{Conjecture}

\newcommand{\scom}[2]{{#1}_{[#2]}}


\newcommand{\polsat}[1]{\ensuremath{\operatorname{\textsc{\upshape{PolSat}}}
                                 \ifthenelse{\equal{#1}{}}{}{\!\left( { #1} \right)}}}
\newcommand{\poleq}[1]{\ensuremath{\operatorname{\textsc{\upshape{PolEqv}}}
                                 \ifthenelse{\equal{#1}{}}{}{\!\left( { #1} \right)}}}

\title{Intermediate problems in modular circuits satisfiability}        

\author{Pawe\l{} Idziak, Piotr Kawa\l{}ek}\thanks{The project is partially supported  by Polish NCN Grant\break \# 2014/14/A/ST6/00138.}
\address{Jagiellonian University, Faculty of Mathematics and Computer Science, Department of Theoretical Computer Science ul.~Prof.~S.~\L{}ojasiewicza~6,~30-348,~Krak\'ow, Poland}
\email{idziak@tcs.uj.edu.pl,  piotr.kawalek@doctoral.uj.edu.pl}

\author{Jacek Krzaczkowski}
\address{Maria Curie-Sklodowska University, Faculty of Mathematics, Physics and Computer Science, Department of Computer Science ul.~Akademicka~9,~20-033,~Lublin, Poland}
\email{krzacz@poczta.umcs.lublin.pl}

\begin{abstract}
In \cite{ik:lics18}
a generalization of Boolean circuits to arbitrary finite algebras had been introduced and applied to sketch  \ptime versus \npc borderline for circuits satisfiability over algebras from congruence modular varieties.
However
the problem for nilpotent (which had not been shown to be NP-hard)
but not supernilpotent algebras (which had been shown to be polynomial time) remained open.

In this paper we provide a broad class of examples, lying in this grey area,
and show that,
under the Exponential Time Hypothesis and Strong Exponential Size Hypothesis
(saying that  Boolean circuits need exponentially many modular counting gates to produce boolean conjunctions of any arity),
satisfiability over these algebras have intermediate complexity between
$\Omega(2^{c\log^{h-1} n})$ and $O(2^{c\log^h n})$,
where $h$ measures how much a nilpotent algebra fails to be supernilpotent.
We also sketch how these examples could be used as paradigms to fill the nilpotent versus supernilpotent gap in general.

Our examples are striking in view of the natural strong connections between circuits satisfiability and Constraint Satisfaction Problem for which the dichotomy had been shown
by Bulatov \cite{bulatov} and Zhuk \cite{zhuk}.
\end{abstract}

\keywords{circuit satisfiability, 
intermediate problems,
solving equations,
constraint satisfaction problem}  %

\maketitle

\section{Introduction \label{sect-intro}}

In \cite{ik:lics18} a generalization of Boolean circuits
to multi-valued ones had been introduced.
This concept was formalized by defining circuits over arbitrary finite algebra $\m A$.
Then the computational complexity of the following problems was considered:
\begin{itemize}
  \item $\csat{\m A}$ -- circuits satisfiability over the algebra $\m A$,
  \item $\scsat{\m A}$ -- simultaneous satisfiability of a set of circuits
                          over the algebra $\m A$,
  \item $\ceqv{\m A}$ -- circuits equivalence over the algebra $\m A$.
\end{itemize}
This has been done by treating the (basic) gates of the circuits
as fundamental operations of the corresponding algebra $\m A$,
while the universe $A$ of this algebra consists
of the possible values on inputs and output of the gates.
Such translation has been shown to preserve the complexity when passing respectively between
\begin{itemize}
  \item $\csat{\m A}$ and deciding if an equation over $\m A$ has a solution,
  \item $\scsat{\m A}$ and deciding if a system of equations over $\m A$ has a solution,
  \item $\ceqv{\m A}$ and deciding if two polynomials determine the same function over $\m A$,
\end{itemize}
but with the possibility of endowing algebra $\m A$ with finitely many additional operations that are already definable in $\m A$.
Such (finite) expansions allows to concentrate on the algebraic structure of the considered algebras in order to classify them with respect to computational complexity of the above problems.
Making the algebra independent of its basic operations is crucial,
as for example equation solving over the group $\m S_3$ can be done in \ptime,
while for the same group endowed with definable operation resembling binary commutator
$[x,y]=x^{-1}y^{-1}xy$ the very same problem became \npc.
Actually \cite{ik:lics18} presents an attempt to such classification
for a very broad class of algebras
covering most of the ones considered in mathematics and computer science,
like groups, rings, modules, lattices, Heyting algebras and many other algebras arising from logic.
The restriction put for those algebras was that they have to belong to congruence modular varieties.
This assumption made it possible to use advanced tools of universal algebras that work in such a setting.
Under this additional assumption it has been shown that
if an algebra $\m A$ fails to decompose nicely,
i.e. into a direct product of a nilpotent algebra and an algebra that essentially is a subreduct
of a distributive lattice then $\csat{}$ for $\m A$ (or at least one of its quotients) is \npc.
And, almost conversely, if $\m A$ does decompose nicely (in the above sense),
but with the additional assumption that the nilpotent factor is actually supernilpotent,
then $\csat{\m A}$ is \ptime.
Very similar statements hold for $\ceqv{\m A}$,
but the `lattice' factor disappears here
as $\ceqv{}$ for distributive lattices is \npc.

Although the problems $\csat{}$ or $\SCsat{}$ resemble Constraint Satisfaction Problem,
there are some subtle differences here, so that the CSP dichotomy shown by Bulatov \cite{bulatov} and Zhuk \cite{zhuk} can not be used for $\csat{}$.
As it was noticed in \cite{ik:lics18} the problems $\SCsat{}$ and $\csp{}$
can be bisimulated in polynomial time,
namely each finite algebra $\m A$ can be transformed into a finite relational structure $\rel D$,
and each finite relational structure $\rel D$ can be translated into a finite algebra $\m A$ so that
the problems $\SCsat{\m A}$ and $\csp{\rel D}$ are equivalent.
Surprisingly for a single circuit/equation such kind of translation works only when going from relational structures to algebras.
In fact, under some complexity hypothesis (like for example ETH)
this paper shows that the other way is blocked.
As we will see, this is due to the hardness of incorporating arbitrary long conjunction
(between constraints) by translating them into relatively short polynomials of an algebra.

As it is easily seen the nice pre-characterization of algebras with $\csat{}/\ceqv{}$ solvable in polynomial time leaves the nilpotent but not supernilpotent gap which is unsolved.
The essential difference between this two concepts of nilpotency
lies in fact that in supernilpotent algebras there is an absolute bound for the arity
of expressible (by polynomials) conjunction.
In nilpotent (but not supernilpotent) algebras conjunction-like polynomials
of arbitrary arity $n$ do always exist
but the known ones are too long to be used to polynomially code \npc problems in $\csat{}$.
In section \ref{sect-strat} we split nilpotent algebras into slices
that will correspond to the measure how much a nilpotent algebra fails to be supernilpotent.
This distance $h$ is determined by the behavior of a multi-ary commutator operation
on congruences of $\m A$ which is used to define $h$-step supernilpotent algebras.
On the other hand we show that it strictly corresponds to the longest chain of alternating primes hidden in the algebra.
Then, we start with any sequence
$p_1 \neq p_2 \neq \ldots \neq p_h$ of primes with $h\geq 2$ and construct
an example of the simplest algebra $\dhh$ that is $h$-step (but not $(h-1)$-step)
supernilpotent to demonstrate how to construct an $n$-ary conjunction polynomial $\pand_n$
of size $O(2^{c n^{1/(h-1)}})$.
This together with the assumption of Exponential Time Hypothesis
will be used to show the following theorem

\begin{thm}
\label{thm-lower}
The complexity for $\csat{\dhh}$ and $\ceqv{\dhh}$
is at least $\Omega(2^{c\cdot \log^{h-1}\cs})$,
where $\cs$ is the size of a circuit $\Gamma$ on the input
(unless ETH fails).
\end{thm}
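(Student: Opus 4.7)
The plan is to reduce \textsc{3-SAT} to $\csat{\dhh}$ using the $n$-ary conjunction polynomial $\pand_n$ of size $O(2^{c n^{1/(h-1)}})$ delivered by the construction of $\dhh$. Given a 3-SAT instance $\varphi = C_1 \wedge \cdots \wedge C_m$ on $n$ variables, I would first fix two distinguished ``Boolean'' elements $0,1 \in D$ together with constant-size polynomials realising unary negation and ternary disjunction on $\set{0,1}$; the disjunction can be produced from De~Morgan as $x \vee y \vee z = \neg\, \pand_3(\neg x, \neg y, \neg z)$, using that $\pand_3$ and a suitable Boolean negation are available as polynomials of size independent of $m$. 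Each clause $C_j$ would then be encoded by a constant-size sub-circuit $\gamma_j$ that returns $1$ on assignments satisfying $C_j$ and $0$ otherwise.

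The crucial step is to assemble the $m$ sub-circuits into a single circuit $\Gamma$ by feeding them as inputs to $\pand_m$. By construction $\Gamma$ evaluates to $1$ on some assignment iff $\varphi$ is satisfiable, and counting gates yields
\[
  \cs \;=\; |\Gamma| \;=\; O(m) \;+\; |\pand_m| \;=\; O\!\left( 2^{c \cdot m^{1/(h-1)}} \right),
\]
so that $m = \Omega(\log^{h-1} \cs)$. Hence any algorithm deciding $\csat{\dhh}$ in time $T(\cs)$ would, after composition with this polynomial-time reduction, decide 3-SAT on $m$ clauses in time $T\bigl(2^{O(m^{1/(h-1)})}\bigr)$.

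Invoking ETH, 3-SAT on $m$ clauses cannot be solved in time $2^{o(m)}$, and therefore $T(\cs) \geq 2^{\Omega(m)} = 2^{\Omega(\log^{h-1} \cs)}$, which is the claimed lower bound for $\csat{\dhh}$. The $\ceqv{\dhh}$ part follows from the same reduction by the observation that $\varphi$ is unsatisfiable precisely when $\Gamma$ is equivalent to the constant-$0$ circuit; since ETH equally forbids subexponential algorithms for 3-SAT unsatisfiability, the size estimate transfers verbatim. The main difficulty I expect is the seemingly innocent step of producing the constant-size Boolean scaffolding (two fixed elements playing the role of $0,1$, a polynomial negation, and a polynomial ternary disjunction on those elements) inside $\dhh$: this is not a formal consequence of nilpotency and has to be read off the explicit combinatorial description of $\dhh$ developed in Section~\ref{sect-strat} together with the construction of $\pand_n$ itself.
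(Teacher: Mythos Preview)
Your overall reduction shape is right, and the $\ceqv{}$ transfer is fine, but the ``constant-size Boolean scaffolding'' you flag as a technicality is in fact the place where your argument breaks and loses exactly one exponent in the bound.

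The $n$-ary conjunction polynomial $\pand_n$ of size $O(2^{cn^{1/(h-1)}})$ is built by composing the one-level maps $\pand^s_k : (e_{k+1}D)^s \to e_kD$ across all $h-1$ level transitions, so its \emph{inputs live at level $h$} and its output at level $1$. For your plan to work as written, each clause gadget $\gamma_j$ must therefore output an element of $e_hD$. But the canonical form \eqref{canon} shows that $e_h\po t(\o x)$ is always affine in the $e_h(x_i)$'s; hence no polynomial of $\dhh$ with range in $e_hD$ can realise a non-affine Boolean function such as ternary $\por$. Your De~Morgan construction $x\vee y\vee z = \neg\,\pand_3(\neg x,\neg y,\neg z)$ does produce a disjunction, but the inner $\pand_3$ already drops the output to level $h-1$, and there is no polynomial sending level $h-1$ back up to level $h$ (the maps $v_k$ only go downward). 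Consequently the $\pand_m$ you can actually feed with your clause outputs takes its inputs at level $h-1$, uses only $h-2$ transitions, and has size $O(2^{cm^{1/(h-2)}})$. Your computation then yields merely $m=\Omega(\log^{h-2}\cs)$ and a lower bound of $\Omega(2^{c\log^{h-2}\cs})$, which for $h=3$ is vacuous.

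The paper recovers the missing level by \emph{not} encoding clauses one at a time. It sets $s=\lceil m^{1/(h-1)}\rceil$, partitions the clauses into $s^{h-2}$ blocks $\Phi_\ell$ of at most $s$ clauses each, and uses Lemma~\ref{lemma31} to realise each block as a single polynomial $\pcnf_{\Phi_\ell}:(e_hD)^{n_\ell}\to e_{h-1}D$ of size $O(2^{cs})$. Thus the first level transition already packs $s$ clauses worth of conjunction, and the remaining $s^{h-2}$-ary $\pand$ (from level $h-1$ down to level $1$, size $O(2^{cs})$) finishes the job. All $h-1$ transitions now contribute, giving $|\po t_\Phi|=O(2^{cs})=O(2^{cm^{1/(h-1)}})$ and the stated bound. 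If you amend your proposal to chunk clauses in this way you recover the paper's proof essentially verbatim.
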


Obviously these lower bounds would be even higher if one finds shorter conjunction terms.
Thus an upper bound for the complexity of $\csat{}$ relies on the (necessary) lenght of polynomials that are able to express $\pand_n$.
A kind of such lower bound had been already introduced as a conjecture
by Barrington, Straubing and Th\'{e}rien \cite{bst90}
in their study of non-uniform automata over groups.
To reword their conjecture for our purposes recall that a counting gate $\cmod^R_m$
(with unbounded fan-in) returns $1$ if all the $1$'s on the input sum up modulo $m$ to an element in $R$, and $0$ otherwise.
Moreover recall that $CC[m]$-circuit is build up with $\cmod^R_m$-gates only.
In this language the conjecture says that:
\begin{itemize}
  \item {\em the sizes of $CC[m]$-circuits $(\Gamma_n)_n$ with bounded depth
        computing $(\pand_n)_n$ grow exponentially in $n$.}
\end{itemize}
Very recently Kompatscher \cite{komp:nil19} has used this conjecture to show that
\begin{itemize}
  \item {\em for every nilpotent algebra $\m A$ from a congruence modular variety
        $\csat{\m A}$ and $\ceqv{\m A}$ can be solved in quasi polynomial time
        $O(2^{c\log^{t}m})$, for some constants $c,t$ depending on $\m A$.}
\end{itemize}
In our study of nilpotent algebras we had noticed that the exponent $t$ in Kompatscher's bound
is strongly correlated with $h$-step supernilpotency
(or in other words the depth of corresponding circuits).
This is now confirmed by Theorem \ref{thm-lower}, so that the above hypothesis has to be weakened accordingly.

A promising weaker version of such a hypothesis might be:
\begin{itemize}
  \item {\em the sizes of \ $CC[m]$-circuits $(\Gamma_n)_n$, of depth bounded by $h>1$,
        that compute $(\pand_n)_n$, grow at least as $\Omega(2^{cn^{1/(h-1)}})$.}
\end{itemize}
A dual version of the above hypothesis,
namely that to build $\cmod_m$ gates (of arbitrary large arity $n$) by circuits of bounded depth
needs superpolynomial (in $n$) number of classical Boolean gates $\pand, \por$ (of unbounded fan in) and $\pnot$, has been used by Furst, Sax and Sipser \cite{fss,sip} to seperate \pspace from polynomial hierarchy by oracles.
Later on Yao \cite{yao} confirmed this dual hypothesis,
while H{\aa}stad \cite{phdhas} has shown that for $\cmod_2$-gates the required sizes are even exponential.

\medskip
Unfortunately the hypothesis that $CC[m]$-circuits of depth $h>1$
require $\Omega(2^{cn^{1/(h-1)}})$ gates to express $(\pand_n)_n$
is blocked by Barrington, Beigel and Rudich in \cite{bbr94}.
They use integers $m$ that have $r\geq 2$ different prime factors
to construct $CC[m]$-circuits of depth $3$
that compute $(\pand_n)_n$ using only $2^{O(n^{1/r}\log n)}$ gates.
Such relatively small circuits are possible to built
by exploring the interaction of $r$ different primes on the very same level of the circuits.
However in our setting of the algebras $\dhh$ there is only one prime
at each level so that it suffices to use $CC[p_1,\ldots,p_h]$-circuits,
i.e. $CC$-circuits where on the $i$-th level there are only $\cmod_{p_i}$ gates, with $p_i$ being prime.
Note here that by our definition $CC[p_1,\ldots,p_h]$-circuits have depth $h$.
Thus the hypothesis we will build our upper bounds is the following
Strong Exponential Size Hypothesis (SESH).
\begin{conjj}[SESH]
The sizes of \ \ $CC[p_1, \ldots, p_h]$-circuits \ $(\Gamma_n)_n$, of depth $h>1$,
that compute $(\pand_n)_n$, grow at least as $\Omega(2^{cn^{1/(h-1)}})$.
\end{conjj}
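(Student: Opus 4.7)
Since SESH is an unconditional circuit lower bound for a depth-$h$ family of $CC[p_1,\ldots,p_h]$-circuits in which the $i$-th layer uses only $\cmod_{p_i}$-gates and the $p_i$ are pairwise distinct primes, the natural line of attack is the polynomial method of Razborov and Smolensky. The plan is to show, by induction on $h$, that a $CC[p_1,\ldots,p_h]$-circuit of size $s$ computing $\pand_n$ yields a polynomial representation of $\pand_n$ over $\mathbb{F}_{p_h}$ of degree $(\log s)^{O(h-1)}$. Since Smolensky's theorem forbids such polynomial representations of $\pand_n$ below degree $\Omega(n)$, this would force $\log s = \Omega(n^{1/(h-1)})$, i.e. $s = \Omega(2^{c n^{1/(h-1)}})$, exactly matching the predicted bound.

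For the base case $h=2$, a single $\cmod_{p_2}$-gate sits atop a collection of $\cmod_{p_1}$-subcircuits whose outputs are Boolean. Each bottom subcircuit is already a polynomial of degree $1$ over $\mathbb{F}_{p_1}$. Translating its $0/1$ output into $\mathbb{F}_{p_2}$ and then expanding the top gate as the characteristic polynomial of a residue set modulo $p_2$ should yield an $\mathbb{F}_{p_2}$-polynomial of degree $O(\log s)$ agreeing with $\pand_n$ on the Boolean cube, contradicting Smolensky. The inductive step would apply the same polynomial approximation layer by layer, always working over the prime currently fixed by the layer, so that the degree increase contributed by each level is $O(\log s)$.

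The principal obstacle, and the reason SESH is stated as a conjecture rather than a theorem, is the $\mathsf{ACC}^0$ barrier: when analysis passes from a layer governed by $\mathbb{F}_{p_i}$ to the next layer governed by $\mathbb{F}_{p_{i+1}}$, polynomial approximations do not compose well, because a Boolean-valued low-degree polynomial over one prime field generically requires high degree over a different prime field. No currently known technique controls this degree blow-up across alternating primes, and even Williams's separation $\mathsf{NEXP}\not\subseteq\mathsf{ACC}^0$ does not yield the quantitative size bound demanded by SESH. Any honest proof would therefore need a new idea that exploits the rigid one-prime-per-layer structure, which is precisely the restriction that blocks the Barrington--Beigel--Rudich construction: their $2^{O(n^{1/r}\log n)}$ upper bound relies crucially on mixing $r$ different primes at a single layer, something SESH explicitly forbids. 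Exploiting this rigidity -- perhaps via a carefully tailored approximation over a ring incorporating all $p_i$ simultaneously, or via an inductive ``degree amplification'' that survives the prime changes -- is where I expect the main creative step to lie, and it is precisely the step at which all standard approaches currently fail.
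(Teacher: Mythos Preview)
The statement you were asked to prove is not a theorem in the paper; it is stated there explicitly as a \emph{conjecture} (the Strong Exponential Size Hypothesis), and the paper provides no proof of it whatsoever. SESH is used purely as a hypothesis under which the upper bounds of Theorems~\ref{thm-duper} and~\ref{thm-puper} are derived. There is therefore no ``paper's own proof'' to compare your attempt against.

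Your write-up is not a proof either, and to your credit you say so yourself: you sketch a Razborov--Smolensky style plan, and then correctly identify the exact point at which it collapses, namely the change of characteristic between layers. The polynomial method gives low-degree approximations over a single prime field, but when one passes from a layer of $\cmod_{p_i}$-gates to a layer of $\cmod_{p_{i+1}}$-gates with $p_i\neq p_{i+1}$, a degree-$d$ polynomial over $\mathbb{F}_{p_i}$ does not in general become a low-degree object over $\mathbb{F}_{p_{i+1}}$, so the induction you propose does not go through. Your diagnosis that this is precisely the $\mathsf{ACC}^0$ barrier, and that the Barrington--Beigel--Rudich upper bound relies on mixing primes within a single level (which SESH forbids), is accurate and matches the paper's own motivation for formulating SESH in this restricted one-prime-per-layer form.

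In short: there is no gap to name in the usual sense, because both you and the paper treat SESH as an open problem. Your discussion of the obstacles is sound, but it is commentary, not a proof.
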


Now, with the help of SESH we can show the upper bound for our problems that almost matches the lower bound of Theorem \ref{thm-lower}.

\begin{thm}
\label{thm-duper}
There are deterministic algorithms solving $\csat{\dhh}$ and $\ceqv{\dhh}$
in $O(2^{c\log^{h} \cs})$ time,
where $\cs$ is the size of a circuit $\Gamma$ on the input
(unless SESH fails).
\end{thm}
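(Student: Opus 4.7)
The plan is to refine Kompatscher's quasipolynomial algorithm for nilpotent algebras in congruence modular varieties, specializing to the central-series structure of $\dhh$ and invoking SESH at the crucial enumeration step in order to pin down the exponent of $\log$ to be exactly $h$. First I would reduce $\ceqv{\dhh}$ to $\csat{\dhh}$ via the standard transformation (testing $p \equiv q$ as unsatisfiability of a companion circuit), which costs only polynomial overhead, so it suffices to bound the running time of $\csat{\dhh}$. The analysis exploits the tower of congruences $0 = \theta_0 < \theta_1 < \cdots < \theta_h = 1$ in $\dhh$, in which $\dhh/\theta_1 \cong \dpp{p_2,\ldots,p_h}$ and each $\theta_1$-block is a coset of an abelian group isomorphic to $\mathbb{Z}_{p_1}$; this lets one analyse polynomials over $\dhh$ layer by layer along the commutator filtration.

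Given an input circuit $\Gamma$ of size $\cs$ with target $t$, I would proceed recursively. Projecting $\Gamma$ to $\dhh/\theta_1$ yields a circuit $\bar\Gamma$ of the same size but with one fewer prime-layer; one recursively solves $\bar\Gamma = \bar t$ using the algorithm for $\dpp{p_2,\ldots,p_h}$. For each candidate quotient-solution, the residual constraint, after fixing a lift-pattern of the variables into the $\mathbb{Z}_{p_1}$-fibers, reduces to a single linear equation over $\mathbb{Z}_{p_1}$ solvable in polynomial time by Gaussian elimination. The decisive quantitative claim is that only $2^{O(\log^{h-1}\cs)}$ lift-patterns need to be examined; combined with a divide-and-conquer cut of the circuit that halves its effective size at each recursion node, this yields a recurrence of the shape $T_h(\cs) \leq 2^{c\log^{h-1}\cs}\cdot T_h(\cs/2) + \mathrm{poly}(\cs)$, which unrolls over $O(\log\cs)$ levels to give $T_h(\cs) = O(2^{c\log^h\cs})$ as required.

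The hard part is to extract from SESH the algorithmic statement that bounds the number of relevant lift-patterns by $2^{O(\log^{h-1}\cs)}$. The idea is that, after substituting fixed values for the lower-layer variables, the projection of $\Gamma$ onto the top $\mathbb{Z}_{p_h}$ layer is computed by a $CC[p_1,\ldots,p_h]$-circuit of depth exactly $h$ and size at most $\cs$; if too many genuinely independent conjunctive lift-patterns were needed, this circuit would realise $\pand_n$ for some $n$ violating $\cs \geq \Omega(2^{cn^{1/(h-1)}})$, forcing $n \leq O(\log^{h-1}\cs)$. Turning this contrapositive into a bona fide enumeration procedure is where the main technical work lies: one must describe a canonical representation of polynomials over $\dhh$ along the commutator filtration so that ``independent'' patterns correspond to independent conjunctive behaviours of the top-layer $CC[p_1,\ldots,p_h]$-circuit, and simultaneously track that the depth of that circuit really matches the nilpotency height $h$. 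Once this structural/combinatorial bridge from SESH to the enumeration count is in place, the rest of the proof is a straightforward recurrence unrolling, with care to ensure the hidden constants do not secretly depend on $\cs$.
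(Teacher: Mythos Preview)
Your proposal has a real structural gap: the recurrence $T_h(\cs) \leq 2^{c\log^{h-1}\cs}\cdot T_h(\cs/2)$ is never justified. You invoke a ``divide-and-conquer cut of the circuit that halves its effective size at each recursion node'', but nothing in the layer-by-layer reduction you describe produces a subproblem of size $\cs/2$; projecting to $\dpp{p_2,\ldots,p_h}$ does not shrink the circuit, and lifting a quotient solution does not split the instance in two. Moreover, the quotient approach by itself does not give an algorithm: a single equation in $\dhh/\theta$ may have exponentially many solutions, and it is not clear which of them (or how many) have to be lifted and tested. Your appeal to SESH is also misdirected: you claim the ``top $\mathbb Z_{p_h}$ layer'' yields a depth-$h$ $CC$-circuit, but the $e_h$-component of any polynomial of $\dhh$ is affine over $\mathbb Z_{p_h}$, so that layer contributes nothing deep. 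The depth that SESH exploits is produced in the \emph{opposite} direction, by the $v_j$'s cascading from level $k$ down to level $1$.

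The paper's argument is organized very differently and avoids any recurrence in the algorithm. It shows, under SESH, that if $\po t(\o x)=0$ has a solution at all then it has one with at most $O(\log^{h-1}\card{\po t})$ nonzero coordinates; the algorithm is then simply a brute-force search over all such sparse tuples, of which there are $n^{O(\log^{h-1}\card{\po t})}=2^{O(\log^{h}\card{\po t})}$. The sparse-solution claim is established level by level: starting from any solution one successively zeroes the $k$-th coordinate on as many variables as possible while keeping the remaining coordinates fixed and staying a solution. If the resulting tuple still has $\ell$ nonzeros on level $k$, the system of equations in canonical form \eqref{canon} is packaged into a single two-valued polynomial $\po t^{\dstar}$ of size $\card{\po t}^{O(1)}$ that computes $\ell$-ary conjunction on $\{0,e_k1\}$; Fact~\ref{fact-gates} converts this into a $CC[p_1,\ldots,p_k]$-circuit of depth $k$ computing $\pand_\ell$, and SESH forces $\ell\leq O(\log^{k-1}\card{\po t})$. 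Summing over $k$ gives the $O(\log^{h-1}\card{\po t})$ sparsity bound. So SESH is used not to bound a branching factor inside a recursive call, but to bound the support of an optimal solution directly; this is the step you would need to recover if you want your outline to go through.
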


Relaxing deterministic realm to a probabilistic one
we can match the lower bound much better.

\begin{thm}
\label{thm-puper}
There  are  probabilistic  algorithms \ solving   $\csat{\dhh}$ \ \ and \ \ \ $\ceqv{\dhh}$
\ in \ time \ $O(2^{c\log^{h-1} \cs})$,
where $\cs$ is the size of a circuit $\Gamma$ on the input
(unless SESH fails).
\end{thm}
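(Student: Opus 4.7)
The plan is to adapt the deterministic algorithm underlying Theorem \ref{thm-duper} by replacing its innermost exhaustive check with randomised sampling, thereby removing a single $\log\cs$ factor from the exponent; the correctness of the sampling step is what requires SESH.

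I would first recall how the deterministic algorithm proceeds. Using the $h$-step nilpotent decomposition of $\dhh$, a circuit $\Gamma$ of size $\cs$ is rewritten as an outer polynomial $f$ over $\mathbb{Z}_{p_h}$ augmented by $h-1$ layers of ``correction'' polynomials, one for each level $p_{h-1},\dots,p_1$ of the commutator tower. Satisfiability then reduces to (i) enumerating the $O(2^{c\log^{h-1}\cs})$ consistent correction profiles coming from the lower levels, and (ii) for each profile, deciding satisfiability (or zero-testing) of the residual polynomial over $\mathbb{Z}_{p_h}$. Step (ii) is where the extra $2^{c\log\cs}$ factor of Theorem \ref{thm-duper} is spent, and is precisely what I would replace by a randomised subroutine.

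The heart of the argument is a density lemma: for any residual polynomial $f$ produced in step (ii), either $f$ is identically zero on the relevant input set (which is detected by a Schwartz--Zippel-style random evaluation) or the fraction of satisfying inputs to $f$ is bounded below by a constant depending only on $h$ and the $p_i$'s, so that $\mathrm{poly}(\cs)$ uniformly random samples find a witness with high probability. The contrapositive of this lemma is what activates SESH: a residual polynomial whose satisfying-set is very sparse would, via the explicit commutator decomposition of $\dhh$ into layers of $\cmod_{p_i}$-gates, yield a depth-$h$ $CC[p_1,\dots,p_h]$-circuit computing a conjunction-like function on many inputs using far fewer than $\Omega(2^{cn^{1/(h-1)}})$ gates, contradicting SESH. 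Plugging the density lemma into the outer enumeration gives the claimed $O(2^{c\log^{h-1}\cs})$ probabilistic bound, and the treatment of $\ceqv{\dhh}$ is identical, with satisfiability of the residual replaced by randomised polynomial identity testing between the two candidate polynomials.

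The main obstacle is formalising the density lemma and making the translation ``sparse satisfiable residual polynomial $\Rightarrow$ short $CC[p_1,\dots,p_h]$-circuit for $\pand_n$'' fully explicit, so that the output of the translation genuinely has the shape prohibited by SESH. This requires tracking how each commutator level of $\dhh$ corresponds to a layer of modular counting gates and how a sparsity assumption at the top of the tower compresses into a gate count at the bottom in the correct arity $n$. Once the translation is in place, standard amplification of the sampling subroutine and combination with the outer enumeration are routine and yield the stated running time.
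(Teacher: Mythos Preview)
Your proposal misreads both the structure of the deterministic algorithm and the nature of the randomized one. The algorithm of Theorem~\ref{thm-duper} does not factor as an outer enumeration of $O(2^{c\log^{h-1}\cs})$ ``correction profiles'' followed by an inner $2^{c\log\cs}$ check over $\mathbb{Z}_{p_h}$; it is a single enumeration over all \emph{almost constant} tuples in $D^n$ (those with at most $O(\log^{h-1}\cs)$ nonzero entries), and the extra $\log$ in the exponent comes from the $\binom{n}{O(\log^{h-1}\cs)}$ choices of \emph{support}, not from an inner satisfiability test. There is no inner step to swap out for sampling, and the level you single out, $\mathbb{Z}_{p_h}$, is in fact the easiest one: $e_h\po t$ is affine.

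The paper's randomized algorithm abandons enumeration entirely. It samples uniformly from $D^n$ and relies on a global density bound: if $\po t^{-1}(0)\neq\emptyset$ then $\card{\po t^{-1}(0)}\geq \Omega\bigl(\card{D}^n/2^{c\log^{h-1}\card{\po t}}\bigr)$. The density is \emph{not} a constant depending only on $h$ and the $p_i$'s as you claim; it degrades with $\card{\po t}$, and precisely this degradation produces the $2^{c\log^{h-1}\cs}$ running time. The bound is proved level by level: from one solution one shows that inside each fibre $E^k(\o u)=\{\o b: e_j(\o b)=e_j(\o u)\text{ for }j\neq k\}$ the solutions occupy at least a $p_k^{-O(\log^{k-1}\card{\po t})}$ fraction, and these factors multiply across $k=1,\ldots,h$.

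The step you flag as ``the main obstacle'' is exactly where a concrete idea is missing. A sparse solution set $Z\subseteq (e_kD)^n$ does not by itself yield a conjunction: a two-valued polynomial that is $e_11$ on some small $Z$ and $0$ elsewhere is not $\pand$. The paper's device is a hyperplane lemma: for any nonempty $Z\subseteq GF(q)^n$ there is an affine subspace $H$ of codimension at most $\log_q\card{Z}+O(1)$ with $\card{Z\cap H}=1$. Substituting the defining affine equations of $H$ into the two-valued polynomial $\po t^\star$ yields a polynomial of arity at least $n-\log_q\card{Z}-O(1)$ with a \emph{unique} solution, which after a shift becomes a genuine $\pand$; Fact~\ref{fact-gates} then turns it into a $CC[p_1,\ldots,p_k]$-circuit and SESH forces $n-\log_q\card{Z}\leq O(\log^{k-1}\card{\po t})$, i.e.\ the density bound. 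Without this isolation-to-a-single-point step, your passage from ``sparse'' to ``short $\pand$-circuit'' does not go through.
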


Note here that all the above theorems give  interesting  bounds only for $h\geq3$.
In fact \cite{ikk:mfcs18} gives a polynomial upper bound for $\dpq$ with $p\neq q$.
However we decided to keep $h=2$ in our theorems,
as their proofs give a nice insight into the structure of the corresponding algebras.

Unfortunately this inside is still not deep enough to be generalized
to $2$-step supernilpotent algebras.
Both $\csat{}$ and $\ceqv{}$ remain open here.

On the other hand with $h\geq 3$ our results show that the dichotomy conjecture
(similar to the one for Constraint Satisfaction Problem) is unlikely to be confirmed.
It may even happen that $\csat{}$ for these algebras would provide a natural example
of an intermediate problem.

Our choice of the algebras $\dhh$ has been done very carefully so that
the proof of the above lower and upper bounds demonstrate the main general idea
but are still readable enough.
The important ingredient here is that the primes involved do alternate,
i.e. $p_1 \neq p_2 \neq \ldots \neq p_h$.
This corresponds to the fact that all the $\pand_n$'s can be obtained from $MOD_m$ gates
only if $m$ is not the power of a prime.

We decided to stay with our argument for this particular family of algebras
although most of the ideas used here can be generalized to $h$-step supernilpotent realm.
In fact in Section \ref{sect-strat} we show why alternation of primes is important
in the study of the expressive power of definable polynomials.
Unfortunately the arguments in a general setting have to be terribly involved
and make a heavy use of tame congruence theory \cite{hobby-mck} and modular commutator theory
\cite{freese-mck}.
A reader that is not experienced enough with universal algebraic tools may skip Section
\ref{sect-strat} and go directly to Section \ref{sect-example} where the main results are shown.
In Section \ref{sect-s4} we apply our methods to the special case of the symmetric group $\m S_4$
(but considered in its pure group language, without a possibility of endowing it by definable operations).
We do that as for many years the complexity of equation solving over this group has been unsettled.
In view of the fact that this problem is polynomial time for $\m S_3$
(while with \npc for $\csat{\m S_3}$)
there has been a hope that the same holds for $\m S_4$.
Now, under the assumption of ETH we destroy this hope.

\section{Stratification of algebras\label{sect-strat}}

Our study of $\csat{}$ for nilpotent algebras relies on the observation that there is a very strong connection between the depth $h$ of the $CC^0$-circuits and stratification of such algebras into $h$-step supernilpotent slices.
To define this stratification we start with recalling the concept of commutator.

If $\alpha$, $\beta$, $\gamma$ are congruences of an algebra then we say that {\em $\alpha$ centralizes $\beta$ modulo $\gamma$}, denoted
$C(\alpha,\beta;\gamma)$, if for every $n \geq 1$, every $(n+1)$-ary
term $\te t$, every $(a,b) \in\alpha$, and every
$(c_1,d_1),\dots,(c_n,d_n)\in\beta$ we have
\[
\te t(a,\bc c) \congruent{\gamma} \te t(a,\bc d)  \mbox{\ \  iff \ \  }
\te t(b,\bc c) \congruent{\gamma} \te t(b,\bc d).
\]
Obviously among all congruences $\gamma$ such that  $C(\alpha,\beta;\gamma)$ there is the smallest one and it is denoted by $[\alpha,\beta]$ and called the commutator of $\alpha$ and $\beta$.

By means of the commutator it is possible to define notions of
abelianity, solvability and nilpotency for arbitrary algebras.
First, for a congruence $\theta$ and $i=1,2,\dots$ we put
\[ \begin{array}{rclcrcl}
\theta^{(0)}&=&\theta & \ \ \ & \theta^{[0]}&=&\theta \\
\theta^{(i+1)}&=&[\theta,\theta^{(i)}] & \ \ \ &\theta^{[i+1]}&=&[\theta^{[i]},\theta^{[i]}].
\end{array}
\]

Now, a congruence $\theta$ of $\m A$ is called
{\em $k$-nilpotent} [or {\em $k$-solvable}]
if $\theta^{(k)}=0_{\m A}$ [$\theta^{[k]}=0_{\m A}$] and the algebra $\m A$
is {\em nilpotent} [{\em solvable}] if $1_A$ is $k$-nilpotent [$k$-solvable] for some finite $k$.

A more detailed discussions of the generalized commutator may be found in
\cite{freese-mck,hobby-mck,mmt}.

The concept of centrality and of the binary commutator has a natural generalization.
Namely, for a bunch of congruences
$\alpha_1,\ldots,\alpha_k,\beta,\gamma \in \con A$
we say that $\alpha_1,\ldots,\alpha_k$ centralize $\beta$ modulo $\gamma$,
and write $C(\alpha_1,\ldots,\alpha_k,\beta;\gamma)$,
if for all polynomials $\po f \in \pol A$ and all tuples
$\o a_1 \congruent{\alpha_1} \o b_1, \ldots, \o a_k \congruent{\alpha_k} \o b_k$
and $\o u \congruent{\beta} \o v$
such that
\[
\po f(\o x_1,\ldots, \o x_k, \o u) \congruent{\gamma} \po f(\o x_1,\ldots, \o x_k, \o v)
\]
for all possible choices of
$(\o x_1,\ldots, \o x_k)$ in $\set{\o a_1,\o b_1} \times \ldots \times \set{\o a_k,\o b_k}$
but $(\o b_1,\ldots.\o b_k)$,
we also have
\[
\po f(\o b_1,\ldots, \o b_k, \o u) \congruent{\gamma} \po f(\o b_1,\ldots, \o b_k, \o v).
\]
This notion was introduced by A.~Bulatov \cite{bulatov:supercomm}
and further developed by E.~Aichinger and N.~Mudrinski
\cite{aichinger-mudrinski}.
In particular they have shown that for all $\alpha_1,\ldots,\alpha_k \in \con A$
there is the smallest congruence $\gamma$ with $C(\alpha_1,\ldots,\alpha_k;\gamma)$
called the $k$-ary commutator and denoted by $\commm{\alpha_1}{\alpha_k}$.
Such generalized commutator behaves especially well in algebras from congruence modular varieties.
In particular this commutator is fully symmetric, monotone, join-distributive and we have
\[
\comm{\alpha_1} {\commm{\alpha_2}{\alpha_k}}
\leq \commm{\alpha_1}{\alpha_k}
\leq \commm{\alpha_2}{\alpha_k}
\]
We will often use this generalized commutator when some (or possibly all) of the $\alpha_i$'s coincide. Thus to emphasize the arity of this supercommutator we will sometimes write
$\commmk{\alpha_1}{\alpha_k}{k}$ instead of $\commm{\alpha_1}{\alpha_k}$.

We say that an algebra $\m A$ is $k$-supernilpotent if $\commmk{1}{1}{k}=0$.
The first inequality in the above display implies that a $k$-supernilpotent is $k$-nilpotent.

However, what is more interesting for our purposes,
is the going down with supernilpotent powers of the congruences in the fashion
of the solvable powers $\theta^{[i]}$.
Since in a finite algebra the sequence
\[
\theta \geq \comm{\theta}{\theta}=\comm{\theta}{\theta}_2 \geq \ldots \geq
\commmk{\theta}{\theta}{i} \geq \commmk{\theta}{\theta}{i+1}\geq\ldots
\]
has to stabilize,
the intersection $\scom{\theta}{1}=\bigcap_i \commmk{\theta}{\theta}{i}$ is actually one of the $\commmk{\theta}{\theta}{j}$'s.
Now we simply put
\[
\scom{\theta}{0}=\theta \mbox{\ \ \ and \ \ \ }
\scom{\theta}{k+1} = \bigcap_i \commmk{\scom{\theta}{k}}{\scom{\theta}{k}}{i}.
\]
This allows us to define $h$-step supernilpotent algebras, as those in which
$\scom{1}{h}=0$.
Note that $h$-solvable algebras are $h$-step supernilpotent in this sense,
so that $h$-step supernilpotent algebras need not be nilpotent.

However in our stratification we restrict ourselves to algebras that are nilpotent.
First we recall a very nice result (due to \cite{freese-mck} and \cite{kearnes:small-free})
illustrating the precise difference between nilpotency and supernilpotency
for finite algebras $\m A$ from congruence modular variety.
It says that the following two conditions are equivalent
\begin{itemize}
 \item $\m A$ is $k$-supernilpotent,
\item $\m A$ is $k$-nilpotent, decomposes into a direct product of algebras of prime power order
and the clone of all terms of $\m A$ is generated by finitely many operations.
\end{itemize}
This nice result can be localized.
To do that, first we need a concept of a characteristic of a covering pair $\theta\prec\delta$
of congruences (of type $\tn 2$, in the sense of Tame Congruence Theory \cite{hobby-mck}).
The fact that $\typ(\theta,\delta)=\tn 2$ says in particular that all traces of
$(\theta,\delta)$-minimal sets are, modulo $\theta$, (and up to polynomial equivalence) one-dimensional vector spaces over the same finite field.
The prime number that is the characteristic of this fields is also used to be called
the characteristic of the prime quotient $\theta\prec\delta$
and denoted by $\charr(\theta,\delta)$.
Now, for $\alpha\leq\beta$ we put
$\charrset{\alpha,\beta} =\set{\charr(\theta,\delta): \alpha\leq\theta\prec\delta\leq\beta}$.
Note here that in our setting if the intervals
$\intv{\alpha}{\beta}$ and $\intv{\alpha'}{\beta'}$
are projective
we have $\charrset{\alpha,\beta}=\charrset{\alpha',\beta'}$.
In case $\fj$ is a meet irreducible congruence, so that it has the unique cover, say $\fj^+$,
we will write $\charr(\fj)$ instead of $\charr(\fj,\fj^+)$.

The second concept needed to localize the characterization of supernilpotent algebras among the nilpotent ones, is a concept of a supernilpotent interval $\intv{\alpha}{\beta}$ in $\con{A}$.
We say that a congruence $\beta\geq\alpha$ is supernilpotent over $\alpha$
if $\scom{\beta}{1}\leq\alpha$.

The last concept needed is the one of a product interval.
We say that $\intv{\alpha}{\beta}$ is the product interval if there are congruences $\beta_1,\ldots,\beta_s$ (called decomposition congruences)
that intersect to $\alpha$ and for each $j$ satisfy
$(\bigcap_{i\neq j} \beta_i) \join \beta_j =\beta$.
In nilpotent algebras, if $\beta$ is supernilpotent over $\alpha$ then the interval $\intv{\alpha}{\beta}$ is not only a product interval,
but in fact it is prime uniform over each the $\beta_j$'s.
To be more precise, by a pupi (or prime uniform product interval) we mean
a product interval $\intv{\alpha}{\beta}$ in which each $\charrset{\beta_j,\beta}$
consists of a single prime, say $p_j$.
Finally we say that $\intv{\alpha}{\beta}$ is prime strongly uniform product interval
(psupi, for short)
if it is a pupi and moreover the primes $p_j$'s are different for different $\beta_j$'

Note here that modularity of the congruence lattice implies that
in a psupi there are no skew congruences between the $\beta_j$'s,
i.e. for every $\theta \in \intv{\alpha}{\beta}$ we have
$\theta = \bigcap_{j=1}^s (\theta\join\beta_j)$.
In particular each such $\theta$ that is locally meet irreducible
(i.e. meet irreducible in the interval $\intv{\alpha}{\beta}$)
has to lie over one of the $\beta_j$'s.

\medskip
Now, our localization says that for congruences $\alpha<\beta$
of  nilpotent algebra of finite type
(from congruence modular varieties)
the following two conditions are equivalent
\begin{itemize}
\item $\beta$ is supernilpotent over $\alpha$,
\item the interval $\intv{\alpha}{\beta}$ is prime uniform product interval.
\end{itemize}
A more detailed study of supernilpotent intervals
and $h$-step supernilpotent stratification
is contained in \cite{ikk:strat}.
Here we only note that the equivalence of the above conditions
can be shown by applying the VanderWerf's idea \cite{vanderwerf:phd} of wreath decomposition.
In fact this has been independently done by Mayr and Szendrei in \cite{mayr-szendrei}.

\medskip
For a better understanding of $h$-step supernilpotent algebras we observe first
that in a finite algebra $\m A$ for every congruence $\alpha$
there is the largest supernilpotent congruence over $\alpha$.
This is due to the fact that the join of two supernilpotent (over $\alpha$)
congruences $\beta_1,\beta_2$ is supernilpotent over $\alpha$.
Indeed, this supernilpotency can be witnessed by
$\commmk{\beta_i}{\beta_i}{k} \leq \alpha$ with the same $k$ for both the $\beta_i$'s.
But now taking $\commmk{\beta_1\join\beta_2}{\beta_1\join\beta_2}{2k}$
and distributing over the join
we get a join of $2k$-folds supercommutators of the $\beta_i$'s.
Since in each such supercommutator one of the $\beta_i$'s occurs at least $k$-times,
this puts each of them, and therefore entire join of $2^{2k}$ summands, below $\alpha$.

This allows us to define the sequence
\[
0=\sigma_0 \leq \sigma_1 \leq \ldots \leq \sigma_k \leq \sigma_{k+1}\leq\ldots
\]
of congruences such that
$\sigma_{k+1}$ is the largest congruence that is supernilpotent over $\sigma_{k}$.
This sequence of supernilpotent intervals
strongly corresponds to the other one used to define $h$-step supernilpotency, namely
\[
\ldots \leq \scom{1}{k+1} \leq \scom{1}{k}\leq \ldots \leq \scom{1}{2}\leq \scom{1}{1}\leq \scom{1}{0}=1.
\]
Indeed, in $h$-step supernilpotent algebra,
we induct on $k=0,\ldots,h$ to show that $\scom{1}{h-k}\leq \sigma_k$.
To pass from $k$ to $k+1$ we start with distributing over the join in the first supercommutator
\begin{align*}
\scom{(\scom{1}{h-(k+1)}\join\sigma_k)}{1}
&\leq\scom{(\scom{1}{h-(k+1)})}{1} \join \scom{(\sigma_k)}{1}\\
&=\scom{1}{h-k} \join \scom{(\sigma_k)}{1}
\leq \sigma_k,
\end{align*}
where the last inequality follows by induction hypothesis.
But what we get means that
$\scom{1}{h-(k+1)}\join\sigma_k$ is supernilpotent over $\sigma_k$,
so that it has to be below $\sigma_{k+1}$, as required.

In particular, with $k=h$ we get that $\scom{1}{h}=0$ implies $\sigma_h=1$.
In a similar fashion one shows that if $\sigma_h=1$ then $\scom{1}{k}\leq\sigma_{h-k}$,
so that $\scom{1}{h}=0$.

This gives that a finite algebra is $h$-step supernilpotent (i.e. $\scom{1}{h}=0$) iff $\sigma_h=1$.

\medskip

After all this preparation we are ready to show how the alternation of primes
(crucial in our study of $\csat{}$ and $\ceqv{}$ for nilpotent algebras)
is connected with $h$-step nilpotency.

\begin{thm}
\label{thm-alt}
For a finite nilpotent algebra $\m A$ from a congruence modular variety the following two conditions are equivalent:
\begin{itemize}
  \item $\m A$ is $h$-step supernilpotent,
  \item every chain $\fj_1<\fj_2<\ldots<\fj_s$ of meet irreducible congruences
  with alternating characteristics
  (i.e. $\charr(\fj_i)\neq\charr(\fj_{i+1})$, for $i=1,\ldots,s-1$),
  has its lenght $s$ bounded by $h$.
\end{itemize}
\end{thm}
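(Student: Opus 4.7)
The plan is to use the stratification of $\con A$ given by the sequence $0=\sigma_0<\sigma_1<\dots\leq 1$ defined just before the theorem, together with the fact (recalled there) that each slot $\intv{\sigma_{k-1}}{\sigma_k}$ is a prime uniform product interval (pupi).

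For the forward implication, assume $\m A$ is $h$-step supernilpotent, so $\sigma_h=1$, and for every meet irreducible $\fj\in\con A$ define its \emph{level} to be the unique $k$ with $\sigma_{k-1}\leq\fj$ but $\sigma_k\not\leq\fj$. I would prove the key lemma: two comparable meet irreducibles $\fj<\fj'$ of the same level $k$ share the same characteristic. Modular perspectivity $\intv{\fj}{\fj\vee\sigma_k}\cong\intv{\fj\wedge\sigma_k}{\sigma_k}$ pushes the cover $\fj\prec\fj^+$ down to a cover $\alpha\prec\alpha^*$ inside the slot, with $\alpha=\fj\wedge\sigma_k$, and similarly for $\fj'$. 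Hence $\alpha,\alpha'$ are locally meet irreducible in the pupi $\intv{\sigma_{k-1}}{\sigma_k}$ and comparable, so in the product decomposition they live in the same factor and share the same prime characteristic. Since perspectivity preserves characteristic, $\charr(\fj)=\charr(\fj')$, which contradicts alternation. Consequently, along any alternating chain $\fj_1<\dots<\fj_s$ the levels must strictly increase, giving $s\leq h$.

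For the backward implication I would argue contrapositively: assuming $\sigma_h<1$, let $H>h$ be minimal with $\sigma_H=1$ and construct an alternating chain $\fj_1<\dots<\fj_H$ of global meet irreducibles with one per slot. Starting from any meet irreducible $\fj_1$ of level $1$ with characteristic $q_1\in\charrset{\sigma_0,\sigma_1}$, I inductively build $\fj_k$ by first choosing a prime $q_k\in\charrset{\sigma_{k-1},\sigma_k}$ with $q_k\neq q_{k-1}$, then picking a locally meet irreducible $\alpha_k\in\intv{\sigma_{k-1}}{\sigma_k}$ above the decomposition congruence carrying prime $q_k$ (with local cover $\alpha_k^*$), and finally selecting a global meet irreducible $\fj_k\in\con A$ with $\alpha_k\vee\fj_{k-1}\leq\fj_k\not\geq\alpha_k^*$. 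The forward-direction perspectivity calculation then gives $\charr(\fj_k)=q_k$ and level$(\fj_k)=k$, which in particular forces $\fj_{k-1}<\fj_k$, and the characteristics alternate by choice of $q_k$.

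The main obstacle is guaranteeing that such a $q_k\neq q_{k-1}$ always exists in $\charrset{\sigma_{k-1},\sigma_k}$. If both $\charrset{\sigma_{k-1},\sigma_k}$ and $\charrset{\sigma_k,\sigma_{k+1}}$ were equal to the same singleton $\{p\}$, the whole interval $\intv{\sigma_{k-1}}{\sigma_{k+1}}$ would have a single-prime characteristic set; a localized version of the Kearnes--Aichinger characterization of supernilpotency (via the wreath decomposition attributed in the preceding text to VanderWerf, and independently to Mayr--Szendrei) would then force such an interval in a nilpotent congruence modular algebra to be supernilpotent, contradicting the maximality of $\sigma_k$ as the largest congruence supernilpotent over $\sigma_{k-1}$. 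Isolating and proving this localized statement is the delicate point; once it is in hand, the alternation of characteristics along the constructed chain is automatic, and the remaining lifting of $\alpha_k$ to $\fj_k$ is routine modular-lattice bookkeeping.
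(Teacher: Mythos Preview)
Your forward implication is essentially the paper's argument: assign to each meet irreducible $\varphi$ the maximal $j$ with $\sigma_j\leq\varphi$, project into the pupi $\intv{\sigma_j}{\sigma_{j+1}}$, and use that two comparable locally meet irreducible elements of a product interval lie over the same decomposition factor and hence share characteristic. That part is fine.

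The backward implication has a real gap. You work with the ascending chain $\sigma_0<\sigma_1<\dots$, whereas the paper switches to the descending chain $1=\scom{1}{0}>\scom{1}{1}>\dots$ and the slots $\Pi_j=\intv{\scom{1}{j}}{\scom{1}{j-1}}$; this asymmetry matters. Your key lemma only excludes the situation where \emph{two consecutive slots are both the same singleton} $\{p\}$. That is not enough to guarantee an alternating choice: take the abstract pattern $C_1=\{2\}$, $C_2=\{2,3\}$, $C_3=\{3\}$, where no two consecutive $C_k$ are the same singleton, yet $q_1=2$ and $q_3=3$ are forced and no $q_2\in C_2$ avoids both. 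You would need the stronger statement that if $p\in\charrset{\sigma_{k-1},\sigma_k}$ (witnessed by a decomposition congruence $\beta$) then $\charrset{\sigma_k,\sigma_{k+1}}\neq\{p\}$. Your single-prime argument gives only that $\sigma_{k+1}$ is supernilpotent over $\beta$; since $\beta\geq\sigma_{k-1}$, this does \emph{not} force $\sigma_{k+1}$ supernilpotent over $\sigma_{k-1}$, so there is no contradiction with the maximality of $\sigma_k$. By contrast, in the $\Pi$-stratification the analogous step reads: $\scom{1}{j-1}$ supernilpotent over $\beta$ implies $\scom{1}{j}=(\scom{1}{j-1})_{[1]}\leq\beta<\scom{1}{j}$, an immediate contradiction arising directly from the \emph{definition} of $\scom{1}{j}$ as a supernilpotent radical rather than a maximal supernilpotent extension. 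This is why the paper changes stratifications for the converse.

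A second, smaller point: the paper does not lift locally meet irreducible $\alpha_k$ to global $\varphi_k$ directly as you propose. Instead it argues by induction on $\card{A}$, passing to carefully chosen quotients $\m A/\alpha$ and checking that the slots $\Pi'_j$ remain nontrivial and that the bottom prime $p_h$ survives. Your lifting step requires $\alpha_k^*\not\leq\alpha_k\vee\varphi_{k-1}$, and since $\varphi_{k-1}$ sits below level $k-1$ (with $\sigma_{k-1}\not\leq\varphi_{k-1}$) while $\alpha_k,\alpha_k^*\geq\sigma_{k-1}$, this inequality is not automatic and would itself need an argument.
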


\begin{proof}
Suppose first that $\fj_1<\fj_2<\ldots<\fj_s$ is such an alternating chain of meet irreducible congruences in an $h$-step supernilpotent algebra.
The idea is to project them into the prime strongly uniform product intervals
of the form $\Sigma_i=\intv{\sigma_i}{\sigma_{i+1}}$
by sending $\theta$ to $f_i(\theta)=(\theta\meet\sigma_{i+1})\join\sigma_i$.
For meet irreducible $\theta$ with the unique cover $\theta^+$
pick $j$ to be maximal with $\sigma_j\leq\theta$.
Observe that then $f_{j}(\theta) < f_j(\theta^+)$,
as otherwise the congruences $\theta^+\meet\sigma_{j+1},\theta,\theta^+,\sigma_{j+1},\theta\join\sigma_{j+1}$
would form a pentagon.
Moreover one can show that for this particular $j$ the congruence $f_{j}(\theta)$ is
locally (i.e. in $\Sigma_j$) meet irreducible.

Thus, if $s>h$ then after projecting the $\fj_t$'s from our alternating chain
into the $\Sigma_i$'s
at least two consecutive ones will fall into the same psupi, say $\Sigma_j$,
without collapsing them with their covers.
But after such projection, both of them are locally meet irreducible in $\Sigma_j$
so that being comparable they have to be over the same decomposition congruence $\beta_i$.
Consequently they must have the same characteristic, contrary to our alternating assumption.
This puts the bound for the alternating chain, as required.

\medskip

Conversely, first note that since $\m A$ is nilpotent it is $h$-step supernilpotent for some $h$.
But now we will use intervals $\Pi_j=\intv{\scom{1}{j}}{\scom{1}{j-1}}$.
From the assumption that $\scom{1}{h-1}>0$ (i.e. $\Pi_h\neq\emptyset$)
we will construct the required chain of meet irreducible congruences of length $h$.

First, starting with an arbitrary prime $p_{h}\in\charrset{\Pi_{h}}$ we
can go down with $j=h-1,\ldots,1$ to isolate the consecutive primes $p_j \in \charrset{\Pi_j}$
satisfying $p_{j} \neq p_{j+1}$.
After fixing $p_{h},\ldots,p_{j+1}$
the possibility to properly choose $p_{j}$ is equivalent to
$\charrset{\Pi_{j}}\not\ci\set{p_{j+1}}$.
This obviously holds if $\card{\charrset{\Pi_{j}}}\geq 2$
so that we assume, to the contrary, that $\charrset{\Pi_{j}}=\set{p_{j+1}}$.
But $p_{j+1}\in \charrset{\Pi_{j+1}}$ is actually witnesses by one of the decomposition congruences,
say $\beta$, of the interval $\Pi_{j+1}$ so that $\charrset{\beta,\scom{1}{j}}=\set{p_{j+1}}$.
But this gives $\charrset{\beta,\scom{1}{j-1}}=\set{p_{j+1}}$.
Indeed, by modularity,
every covering pair $\beta\leq \theta\prec\delta\leq\scom{1}{j-1}$
either projects down to
$\scom{1}{j}\meet\theta \prec \scom{1}{j}\meet\delta$ inside $\intv{\beta}{\scom{1}{j}}$,
or up to $\scom{1}{j}\join\theta \prec \scom{1}{j}\join\delta$ inside $\Pi_{j+1}$.
In either case it inherits the characteristic $p_{j+1}$.
However now, $\charrset{\beta,\scom{1}{j-1}}=\set{p_{j+1}}$ yields that
$\scom{1}{j-1}$ is supernilpotent over $\beta$ so that we get a contradiction
$\scom{1}{j}\leq\beta$.

\medskip

Now, knowing that there is an alternating chain of primes from $\Pi_h\times\ldots\times\Pi_1$
we will inductively show that $\m A$ has $h$-long chain of meet irreducible congruences
$\psi_h <\psi_{h-1}<\ldots<\psi_1$ with alternating characteristics.
The characteristics of these meet irreducibles do not necessarily coincide with the one from the starting chain of primes, as during the recursion process we call our procedure for a smaller (quotient) algebra $\m A'$ in which the sets $\Pi'_j$ may be smaller.
Thus, when passing from $\m A$ to $\m A'$ the initial sequence
$p_h,p_{h-1},\ldots,p_1$ may change to $p_h,p'_{h-1},\ldots,p'_1$,
but the prime $p_h$ at the lower level remains unchanged.
All we need to take care of is that $p_h\neq p'_{h-1}$.

The easier case is when $\scom{1}{h-1}$ does not cover $0$
so that we can pick $0<\alpha\prec\scom{1}{h-1}$.
Passing to the quotient algebra $\m A' = \m A/\alpha$ we know that its chain of intervals $\Pi'_j$
coincide with the original one of the $\Pi_j$,
except $\Pi'_h=\intv{\alpha}{\scom{1}{h-1}}$.
But constructing the chain of the $p_j$'s for $\m A$
we may start with $p_h=\charr(\alpha,\scom{1}{h-1})$.
Then the chain of meet irreducibles for $\m A'$ nicely serves also for the original $\m A$.

Also, if $\scom{1}{h-1}$ is the unique atom of $\m A$, the algebra is subdirectly irreducible,
i.e. $0$ is a meet irreducible congruence of $\m A$.
Induction hypothesis applied to the quotient $\m A=\m A/\scom{1}{h-1}$,
but this time with $h$ smaller by $1$ and the shorter chain of primes $p_{h-1},\ldots,p_1$
obtained from the one for $\m A$ by simply deleting $p_h$,
equip us with the $(h-1)$-long chain of meet irreducibles, which after adding $\psi_h=0_A$
serves pretty well for $\m A$.

In the last case we have two different atoms $\scom{1}{h-1}$ and $\alpha$ in $\con{\m A}$.
Again we will pass to the quotient $\m A'=\m A/\alpha$,
but this time to make sure that this is going to work
we need to make sure that the new intervals
$\Pi'_j=\intv{\scom{1}{j}\join\alpha}{\scom{1}{j-1}\join\alpha}$'s
are non trivial (so that exactly $h$ corresponding primes can be chosen at all).
Suppose to the contrary that for some $j<h$ we have
$\alpha\join\scom{1}{j}=\alpha\join\scom{1}{j-1}$, so that
$\scom{1}{j-1}\leq\alpha\join\scom{1}{j}$.
Obviously $\alpha\not\leqslant\scom{1}{j}$,
as otherwise $\scom{1}{j-1}\leq\scom{1}{j}$,
contrary to our assumption that the seqence of the $\scom{1}{j}$'s is strictly decreasing.
On the other hand $\alpha\leq\scom{1}{j-1}$,
as otherwise $\scom{1}{j-1}$ and $\alpha$ would meet to $0$ and therefore
(together with $\scom{1}{j}$) would generate a pentagon.
In fact $\alpha\neq\scom{1}{h}$ tells us that then $\alpha<\scom{1}{j-1}$,
so that we can pick $\gamma$ with $\alpha\leq\gamma\prec\scom{1}{j-1}$.
As every congruence is supernilpotent over each of its subcovers,
we get that $\scom{1}{j}\leq\gamma$
and consequently we get a contradiction
$\gamma \geq\scom{1}{j}\join\alpha=\scom{1}{j-1}\join\alpha =\scom{1}{j-1}$.

Now note that although the intervals $\Pi_j$'s may loose the prime $\charr(0,\alpha)$,
we know that the only candidate for $p_h$,
namely $\charr(0,\scom{1}{h-1})$
still stays in $\charrset{\Pi'_j}$ as $p_h=\charr(0,\scom{1}{h-1})=\charr(\alpha,\alpha\join\scom{1}{h-1})$.
\end{proof}

\section{A paradigm for $h$-step supernilpotent algebras \label{sect-example}}

We start with an algebra that will serve us as paradigm for our considerations.
Fix a positive integer $h$ and a sequence $p_1,p_2,\ldots,p_h$ of primes.
Define an algebra $\dhh$ to be the expansion of the product
$\m Z_{p_1} \times \ldots \times \m Z_{p_h}$
of Abelian groups by the additional unary operations
$e_1,\ldots,e_h$ and $v_1,\ldots, v_{h-1}$
defined for $x=(x_1,\ldots,x_h) \in \m Z_{p_1} \times \ldots \times \m Z_{p_h}$ by
\begin{eqnarray*}
e_j(x)   &=& (0,\ldots,0,x_j,0,\ldots,0),\\
v_j(x)   &=& (0,\ldots,0,b_j(x_{j+1}),0,\ldots,0),
\end{eqnarray*}
where $b_j : \m Z_{p_{j+1}} \map \m Z_{p_{j}}$ is a function given by
$b_j(0)=0$ and $b_j(a)=1$ otherwise.

Note here that
\begin{itemize}
  \item the algebra $\dpp{p}$ is simply the group $\m Z_p$, so that it is Abelian,
  \item the algebra $\dpp{p,q}$, with $p\neq q$ had been extensively studied in \cite{ikk:mfcs18} where a polytime algorithm was presented both for $\csat{\dpp{p,q}}$ and $\ceqv{\dpp{p,q}}$.
\end{itemize}
Here we will study the algebras of the form $\dhh$ with the assumption that the sequence $p_1,p_2,\ldots,p_h$ of primes is alternating, i.e. $p_i \neq p_{i+1}$. Then we will show that
\begin{itemize}
  \item the algebra $\dhh$ is $h$-nilpotent (actually $h$-step supernilpotent).
\end{itemize}
Since the algebra $\dhh$ has an underlying group structure each equation of polynomials
$\po t = \po s$ that may be an input to $\csat{}$ or $\ceqv{}$
can be replaced by $\po t-\po s=0$ so that we restrict ourselves to the equations of this special shape.

\subsection{The structure of $\dhh$\label{sect-structure}}
To understand the algebra $\dd=\dhh$ we start with defining a couple of its constants
$0=(0,\ldots,0), 1=(1,\ldots,1)$
and polynomials $e^k$ by putting $e^k(x) =\sum_{j\geq k} e_j(x)$.

Now it is easy to observe that the relations
\[
\theta_k = \set{(a,b)\in \dd^2 : e^k(a)=e^k(b)}
\]
together with the total congruence $\theta_{h+1}$ form a chain
$0=\theta_1 < \theta_2 < \ldots <\theta_h < \theta_{h+1} =1$
and that they are actually all congruences of $\dd$
-- indeed every principal congruence of $\dd$ is one of the $\theta_i$'s.

Inducting on the complexity of a polynomial $\po t(x_1,\ldots,x_n)$ of $\dd$
we can easily show that $e_j\po t(\o a) = e_j\po t(\o b)$,
whenever $j\geq k$ and $a_i \theta_k b_i$.
This means that a polynomial having a range contained in
$e_k(D) =\set{0}\times\ldots\times\set{0}\times \m Z_{p_k}\times\set{0}\times\ldots\times\set{0}$
does not depend on the values of the first $k-1$ summands in $x=\sum_{j=1}^h e_j(x)$.
Also an inspection of the behavior of the basic operations of $\dd$
(in particular noticing that $e_k(x+y)=e_k(x)+e_k(y), e_k(e_k(x))=e_k(x), e_k(v_k(x))=v_k(x)=v_k(e_{k+1}(x))$ and $e_k(e_\ell(x))=0=e_k(v_\ell(x))$ for $k\neq\ell$),
allows us to represent every polynomial $\po t(\o x)$ with the range contained in $e_k(D)$,
i.e. a polynomial satisfying $\po t =e_k\po t$, by a sum of expressions of the form
$e_kx_i, e_kc$ or $v_k\po s$,
where $x_i$ is a variable, $c$ is a constant and $\po s$ is some polynomial of $\dd$.
In order to have $v_k\po s \neq 0$ we may assume that the range of $\po s$ is contained in
$e_{k+1}(D)$, as $v_k\po s = v_k e_{k+1}\po s$.
However, as we have already noticed, polynomials with the range contained in $e_{k+1}(D)$
depends only on the projections $e^{k+1}(x_i)$ of its variables.
Summing up we know that
\[
e_k\po t(\o x) =
c + \sum_{i=1}^n \lambda_i\cdot e_k(x_i) +
\sum_{\po s \in S} \kappa_{\po s}\cdot v_k\po s(e^{k+1}(x_1),\ldots,e^{k+1}(x_n)),
\]
where $c\in e_k(D)$ is a constant,
the multiplication by the scalars $\lambda_i$'s or $\kappa_{\po s}$
(taken from $\m Z_{p_k}$)
is a shortening for adding the appropriate elements appropriate number of times,
and $S$ is a set of polynomials of $\dd$ with ranges contained in $e_{k+1}(D)$.

To estimate the length of the above representation of $e_k\po t$ note that
since $e_k$ distributes over the addition,
we know that the number of summands in the above display
(including those hidden in the $\lambda_i$'s and the $\kappa_{\po s}$'s)
is bounded by the number of additions in $\po t$.
Moreover note that each $\po s \in S$ is in fact a subterm of $\po t$
(and that they are pairwise disjoint subterms of $\po t$)
so that $\sum_{\po s\in S} \card{\po s} \leq \card{\po t})$.
In particular the length of the above representation is bounded by $O(\card{\po t})$.
We will often refer to this representation as the canonical representation
keeping in mind that $\po t(\o x) = \sum_{j=1}^k e_k\po t(\o x)$ and that
\begin{equation}
\begin{split}
\label{canon}
e_h\po t(\o b)  &=
c^h + \sum_{i=1}^n \lambda^h_i\cdot e_h(b_i),
\\
e_{h-1}\po t(\o b) & =
c^{h-1} + \sum_{i=1}^n \lambda^{h-1}_i\cdot e_{h-1}(b_i) \\
        &\ \ \ \ +
\sum_{\po s \in S^{h-1}} \kappa_{\po s}^{h-1}\cdot v_{h-1}\po s(e^{h}(b_1),\ldots,e^{h}(b_n)),
\\
& \vdots
\\
e_1\po t(\o b) & =
c^{1} + \sum_{i=1}^n \lambda^{1}_i\cdot e_1(b_i) \\
      &\ \ \ \ +
\sum_{\po s \in S^{1}} \kappa_{\po s}^1\cdot v_{1}\po s(e^{2}(b_1),\ldots,e^{2}(b_n)).
\end{split}
\end{equation}
But what is more important for us is that such (relatively short) canonical representation
can be obtained not only from a polynomial of $\dd$ but also from a circuit $\Gamma$ over $\dd$ that computes this polynomial.
This is not entirely obvious, as sometimes circuits may have logarithmic size with respect to the length of a polynomial they compute.
Each node of the circuit $\Gamma$ determines a subcircuit $\Gamma'$ of $\Gamma$.
With each $\Gamma'$ we associate a polynomial $\po t_{\Gamma'}$ (possibly too large)
in such a way that $\po t_{\Gamma''}$ is a subpolynomial of $\po t_{\Gamma'}$
whenever $\Gamma''$ is determined by a node in $\Gamma'$.
Despite the sizes of the $\po t_{\Gamma'}$'s we go to their canonical representations,
as described in \eqref{canon}.
All the data we need to store for the $e_j\po t_{\Gamma'}$'s $(j=1,\ldots,h)$ are
the constants $c^j, \lambda^j_i, \kappa^j_{\po s}$ and the sets $S^j$ themselves.
There is an easy bound for the constants, once we bound $\card{S^j}$.
To unwind this recursive construction note that for each $\po s \in S^j$ we need to keep its data only for one level, namely $e_{j+1}\po s$, as $v_j\po s= v_je_{j+1}\po s$.
Now, since $\card{S^j}$ is bounded from above by the number of subcircuits of $\Gamma'$
we get $\card{S^j}\leq \card{\Gamma'}$.
So, unwiding this construction for entire $\Gamma$
we get that our canonical representation of $\po t_\Gamma$
is of size $O(\card{\Gamma}^{h^2})$.
Thus, in what follows, we will simply put our lower and upper bounds in terms of
the size of canonical representation for polynomials rather than for circuits.

\medskip

Our next observation shows a connection between some polynomials of $\dhh$ and $CC[p_1,\ldots,p_h]$-circuits.

\begin{fact}
\label{fact-gates}
For an $n$-ary  polynomial $\po g$ of $\dd$, and $j<k$ the mapping 
$$v_je_{j+1} \po g (e_k x_1,\ldots,e_k x_n) : \set{0, e_k1}^n \map \set{0,e_j1}$$
can be simulated by a $CC[p_{j+1},\dots,p_k]$-circuit $\Gamma$ 
of depth $k-j$ and size $O(\card{\po g})$ in a way, that
\[
v_je_{j+1}\po g (e_k x_1,\ldots,e_k x_n) =
\left\{
\begin{array}{ll}
0,     &\mbox{if \ } \Gamma(\bb(x_1),\ldots,\bb(x_n))= 0,\\
e_j1,  &\mbox{if \ } \Gamma(\bb(x_1),\ldots,\bb(x_n))= 1,
\end{array}
\right.
\]
where the Boolean function $\bb(x): \set{0, e_k1} \map \set{0,1}$
returns $0$ if $e_k(x)=0$ and $1$ otherwise.
\end{fact}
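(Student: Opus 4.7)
The plan is to induct on $k - j \geq 1$, using the canonical representation \eqref{canon} of the level-$(j+1)$ component $e_{j+1}\po g$ to peel off one $\cmod$-layer at a time. The structural input (from Section~\ref{sect-structure}) is that the subterms $\po s \in S^{j+1}$ appearing in the canonical form of $e_{j+1}\po g$ are pairwise disjoint subterms of $\po g$, so that $\sum_{\po s \in S^{j+1}} \card{\po s} \leq \card{\po g}$; this is what will make the inductive size bounds telescope.

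For the inductive step ($k-j > 1$) I would substitute $b_i = e_k x_i$ with $x_i \in \set{0, e_k 1}$ into
\[
e_{j+1}\po g(\o b) = c^{j+1} + \sum_i \lambda^{j+1}_i e_{j+1}(b_i) + \sum_{\po s \in S^{j+1}} \kappa^{j+1}_{\po s}\, v_{j+1}\po s(e^{j+2}(b_1), \ldots, e^{j+2}(b_n))
\]
and exploit two coincidences: $e_{j+1}(e_k x_i) = 0$ since $j+1 \neq k$, and $e^{j+2}(e_k x_i) = e_k x_i$ since $k \geq j+2$. The right-hand side then collapses to $c^{j+1} + \sum_{\po s} \kappa^{j+1}_{\po s}\, v_{j+1}\po s(e_k x_1, \ldots, e_k x_n)$. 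Since $v_{j+1}\po s = v_{j+1}e_{j+2}\po s$, each summand matches the hypothesis of the fact with $j$ replaced by $j+1$; induction supplies a $CC[p_{j+2},\ldots,p_k]$-circuit $\Gamma_{\po s}$ of depth $k-j-1$ and size $O(\card{\po s})$ Booleanly simulating it. I then place on top a $\cmod^R_{p_{j+1}}$ gate with $R = \m Z_{p_{j+1}} \setminus \set{-c^{j+1} \bmod p_{j+1}}$ that receives the output of each $\Gamma_{\po s}$ with multiplicity $\kappa^{j+1}_{\po s}$; it outputs $1$ precisely when $e_{j+1}\po g(e_k \o x) \neq 0$, i.e.\ when $v_j e_{j+1}\po g(e_k \o x) = e_j 1$, as required.

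The base case $k = j+1$ is the same recipe degenerated: $e^{j+2}(e_{j+1} x_i) = 0$ forces every $v_{j+1}\po s$-term to be a constant absorbable into $c^{j+1}$, while the surviving $\lambda^{j+1}_i e_{j+1}(x_i)$ terms form an affine function mod $p_{j+1}$ of the bits $\bb(x_i)$, realized by a single $\cmod^R_{p_{j+1}}$ gate with multiplicities $\lambda^{j+1}_i$.

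The depth count $(k-j-1)+1 = k-j$ and the labelling $\cmod_{p_{j+1}}, \cmod_{p_{j+2}}, \ldots, \cmod_{p_k}$ from top to bottom are automatic from the construction. The main subtlety I anticipate is the size accounting: the coefficients $\kappa^{j+1}_{\po s}, \lambda^{j+1}_i \in \m Z_{p_{j+1}}$ contribute only a $p_{j+1}$-bounded, hence constant, overhead per summand, and the disjoint-subterm property is exactly what makes the telescoping $\sum_{\po s \in S^{j+1}} O(\card{\po s}) + O(1) \leq O(\card{\po g})$ work. Everything else is a routine transcription of one polynomial layer into one $\cmod$ layer.
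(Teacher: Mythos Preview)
Your proposal is correct and follows essentially the same approach as the paper: downward induction on $j$ (equivalently, induction on $k-j$), collapsing the canonical form of $e_{j+1}\po g(e_k\o x)$ via $e_{j+1}e_k=0$ and $e^{j+2}e_k=e_k$, then capping the inductively obtained subcircuits $\Gamma_{\po s}$ with a single $\cmod^R_{p_{j+1}}$ gate. Your treatment is in fact slightly more careful than the paper's in two places (you write $e^{j+2}$ and $R\subseteq\m Z_{p_{j+1}}$ where the paper has minor slips), and you make the size-telescoping argument via the disjoint-subterm property explicit, which the paper leaves implicit.
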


\begin{proof}
We induct on $j=k-1,\ldots,1$ to built the required circuit from the gates $MOD^R_{p_i}$'s.
Since for $j=k-1$ the function $e_k\po g e_k$ actually maps $e_kD^n$ into $e_kD$,
it has to be an affine function of the form $c+\sum_{i=1}^n \lambda_i e_kx_i$.
Thus $v_{k-1}e_k\po g e_k$ can be simulated by one gate $MOD_{p_k}^{\m Z_{p_k}-\set{-c}}$
with each of the $\bb(x_i)$'s put to the gate $\lambda_i$ times on input.

Going down with $j$ our canonical form gives that
\begin{equation*}
\begin{split}
e_{j+1}\po g(e_k\o x) &=
c^{j+1} + \sum_{i=1}^n \lambda^{j+1}_i\cdot e_{j+1}(e_kx_i) \\
&+\sum_{\po s \in S^{j+1}} \kappa^{j+1}_{\po s}\cdot v_{j+1}\po s(e^{j+1}(e_kx_1),\ldots,e^{j+1}(e_kx_n)),
\end{split}
\end{equation*}
which actually reduces to
\[
e_{j+1}\po g(e_k\o x)  =
c^{j+1} + \sum_{\po s \in S^{j+1}} \kappa^{j+1}_{\po s}\cdot v_{j+1}\po s(e_kx_1,\ldots,e_kx_n),
\]
as $e_{j+1}e_k x=0$ and $e^{j+1}e_kx=e_kx$.
Now, given the circuits $\Gamma_{\po s}$
that do the job for all the $v_{j+1}e_{j+2}\po s e_k$ with $\po s \in S^{j+1}$,
we feed $MOD_{p_{j+1}}^{\m Z_{p_k}-\set{-c^{j+1}}}$
with each $\Gamma_{\po s}$ repeated $\kappa^{j+1}_{\po s}$ times.
\end{proof}

\medskip

Using our understanding of polynomials of $\dd$,
provided by the canonical representation \eqref{canon},
we can now easily determine the behavior of the commutator of congruences of $\dd$.
Namely if $i\leq j$ then $\com{\theta_i}{\theta_j} = \theta_{i-1}$.

We start here with an adaptation of Lemma 3.1 from \cite{ikk:mfcs18}.
The original Lemma has been formulated for the algebra of the form $\dpq$ with $p\neq q$,
while we will need it in our more general context of $\dhh$.
Obviously the algebra $\dpp{p_k,p_{k+1}}$
can be identified with
$e_k\dd + e_{k+1}\dd =
\set{0} \times \ldots \times \set{0} \times
\m Z_{p_k} \times \m Z_{p_{k+1}} \times \set{0} \times \ldots \times \set{0} \ci \dd$.

\begin{lm}\label{lemma31}
For \ $k<\ell$ and all \ $m$,
every function of the form $g : (e_\ell D)^m \map e_k D$
can be represented by an $m$-ary  polynomial $\po p$ of $\m D$,
with both its size and the time needed to actually compute it bounded by $O(2^{cm})$,
where the constant $c$ depends only on the algebra $\dd$.
\end{lm}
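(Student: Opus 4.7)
The plan is to induct on the gap $\ell - k$, leveraging the canonical form \eqref{canon} together with the alternation assumption $p_i \neq p_{i+1}$ at each level. The central difficulty sits in the base case $\ell = k+1$; once that is handled, the higher-gap cases reduce to it cleanly.

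For $\ell = k+1$ the canonical form \eqref{canon} of a polynomial $\po p$ with range in $e_k D$ simplifies drastically on inputs restricted to $(e_\ell D)^m$: the identities $e_k(x_i) = 0$ and $e^{k+1}(x_i) = x_i$ force
\[
\po p(\o x) \;=\; c \;+\; \sum_{L} \kappa_L \cdot v_k\bigl(L(\o x)\bigr),
\]
where $L$ ranges over affine forms $\m Z_{p_\ell}^m \to \m Z_{p_\ell}$ (the inner polynomials $\po s$ with range in $e_{k+1}D$ themselves reduce to affine forms, because restricting to $e_\ell D$-inputs kills all $e^{k+2}$-data in their canonical form). There are only $p_\ell^{m+1}$ such $L$, each encodable in $O(m)$ space, so any expression of this shape already has size $O(2^{cm})$. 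What must be proved is that every $g \colon (e_\ell D)^m \to e_k D$ is actually captured in this form, i.e.\ that the $\m Z_{p_k}$-span of $\{1\} \cup \{v_k \circ L : L \text{ affine}\}$ fills the whole function space $\m Z_{p_k}^{\m Z_{p_\ell}^m}$.

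I would establish this span claim by duality. If a weighting $\psi \colon \m Z_{p_\ell}^m \to \m Z_{p_k}$ pairs to $0$ with the constant function and with every $v_k \circ L$, then $\sum_{\o x \in H} \psi(\o x) = 0$ for every affine hyperplane $H$ of $\m Z_{p_\ell}^m$. Fixing $\o a$ and summing these identities over all hyperplanes through $\o a$ (using that $\o a$ lies in $(p_\ell^m - 1)/(p_\ell - 1)$ affine hyperplanes while any pair $(\o a, \o x)$ with $\o x \neq \o a$ lies in exactly $(p_\ell^{m-1} - 1)/(p_\ell - 1)$ common hyperplanes), and invoking $\sum \psi = 0$, the double sum telescopes to $\psi(\o a) \cdot p_\ell^{m-1} = 0$ in $\m Z_{p_k}$. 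The alternation $p_k \neq p_\ell$ makes $p_\ell^{m-1}$ a unit in $\m Z_{p_k}$, forcing $\psi \equiv 0$. This is the hardest step; without alternation the argument would fail, since $p_\ell^{m-1}$ would not be invertible.

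For the inductive step $\ell - k \geq 2$ I apply the inductive hypothesis at the pair $(k+1, \ell)$: for each $\o a \in (e_\ell D)^m$ the $e_{k+1}D$-valued point indicator $f_{\o a}$ (sending $\o a$ to $e_{k+1}(1)$ and everything else in $(e_\ell D)^m$ to $0$) is realized by a polynomial $\po f_{\o a}$ of size $O(2^{cm})$. Then $v_k \po f_{\o a}$ computes the $e_kD$-valued point indicator $e_k(1) \cdot 1_{\{\o a\}}$, and assembling
\[
g(\o x) \;=\; \sum_{\o a \in (e_\ell D)^m} g(\o a) \cdot v_k \po f_{\o a}(\o x)
\]
(with each scalar $g(\o a) \in e_kD \cong \m Z_{p_k}$ encoded as a repetition count) yields a polynomial of total size $p_\ell^m \cdot O(2^{cm}) = O(2^{c'm})$. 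The exponent grows by an additive $\log p_\ell$ per inductive step, so after at most $\ell - k \leq h$ levels the final constant still depends only on $\dd$; the construction is fully explicit, so the computation time is bounded by the size.
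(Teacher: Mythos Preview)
Your argument is correct, and it takes a genuinely different route from the paper on both parts.

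For the base case $\ell=k+1$, the paper does not argue directly: it constructs the auxiliary polynomial $v'_k(x)=e_k1-v_k(e_{k+1}1-e_{k+1}x)$, checks that $\sum_{a\neq 0} v'_k(a)\neq 0$, and then invokes Lemma~3.1 of \cite{ikk:mfcs18} as a black box to produce $\po p$. Your hyperplane double-counting is a self-contained replacement for that citation; it also makes transparent exactly where the alternation $p_k\neq p_{k+1}$ enters (as invertibility of $p_\ell^{m-1}$ in $\m Z_{p_k}$), whereas in the paper this is hidden inside the external lemma. For the general case $\ell>k+1$, the paper does \emph{not} induct on the gap: it performs a single one-hot reduction, replacing each variable $x_i\in e_\ell D$ by the tuple $\bigl(1-v_{k+1}\cdots v_{\ell-1}(x_i-\lambda\cdot e_\ell 1)\bigr)_{\lambda<p_\ell}$ in $(e_{k+1}D)^{p_\ell}$, and then applies the base case once with $p_\ell m$ variables. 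Your inductive descent through point indicators is cleaner conceptually but accumulates the constant over $\ell-k$ steps; the paper's one-shot encoding jumps straight to level $k+1$ and keeps the blow-up to a single factor of $p_\ell$ in the arity. Either way the final bound is $O(2^{cm})$ with $c$ depending only on $\dd$.
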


\begin{proof}
We start with observing that for a polynomial $v'_k(x) = e_k1-v_k(e_{k+1}1-e_{k+1}x)$
we have $v'_k(0)=0, v'_k(e_{k+1}1)=e_k1\neq 0$
and $v'_k(e_{k+1}a)=0$ for all $a\in e_{k+1}D -\set{e_{k+1}1}$.
Thus $\sum_{a\in e_{k+1}D -\set{0}} v'_k(a) = v'_k(e_{k+1}1)\neq 0$,
so that we are in the scope of Lemma 3.1 of \cite{ikk:mfcs18}
which yields a required polynomial $\po p$ representing the function $g$.
Moreover the shape of this polynomial (provided in that Lemma) allows us to bound its size
(and the time to produce it) by
$O(p_{k+1}^{m+1}\cdot p_k \cdot m \cdot (p_{k+1}+1)) = O(p_{k+1}^{m}\cdot m)$, as required.

Now, if $\ell>k+1$, we inflate each variable $x_i, (i=1,\dots,m)$  into $p_\ell$ variables
$x_i^0,\ldots,x_i^{p_\ell-1}$
This allows us to project an element $a\in e_\ell D$ into a tuple
$(a^0,a^1,\ldots,a^{p_\ell-1})\in e_kD^{p_\ell}$
by putting $a^\lambda = 1-v_{k+1}\ldots v_{\ell-1}(a-\lambda\cdot e_\ell1)$.
Note that for each such $a$ exactly one of the $a^\lambda$'s is nonzero (actually it is $e_{k+1}1$),
namely the one with $\lambda$ occurring in the $\ell$-th position of $a=(a_1,\ldots,a_h)$.
Take any function $g':(e_{k+1}D)^{p_\ell\cdot m} \map e_kD$
satisfying
\[
g'(x^0_1,\ldots,x^{p_\ell-1}_1,x^0_2,\ldots,x^0_m,\ldots,x^{p_\ell-1}_m)=g(x_1,\ldots,x_m)
\]
whenever
$x^\lambda_i = 1-v_{k+1}\ldots v_{\ell-1}(x_i-\lambda\cdot e_\ell1)$.
Using the case $\ell=k+1$ the function $g'$ can be represented by a $p_\ell\cdot m$-ary polynomial $\po p'$ od $\dd$.
It should be obvious that now substituting $1-v_{k+1}\ldots v_{\ell-1}(a-\lambda\cdot e_\ell1)$'s
for the $x^\lambda_i$'s we get an $m$-ary polynomial $\po p$ of $\dd$ representing $g$.
Moreover $\card{\po p} \leq O(\card{\po p'}) \leq O(2^{c'm})$ with $c'=p_\ell c$.

An inspection of the proof of Lemma 3.1 in \cite{ikk:mfcs18} provides a bound for the time needed to actually find the required polynomials, as claimed.
\end{proof}

\bigskip
We conclude this subsection with mentioning a very nice feature of the algebra $\dd$.
Namely $\dd$ is as rich in polynomials as possible.
This means that every function $g: D^n \map D$ that preserves congruences of $\dd$ and their commutator is already a polynomial of $\dd$.
As we are not going to use this fact in our future considerations we provide only a brief sketch of its proof.

Starting with $g$ preserving congruences and their commutator
we know that the algebra $\dd$ endowed by $g$ is still nilpotent.
Obviously $g$ can be represented as the sum $\sum_{k=1}^h e_kg$.
Since the range of $e_kg$ is contained in $e_k D$,
we can recursively apply Proposition 7.1 of \cite{freese-mck} to claim that
$e_kg(\o x)$ can be represented by
$c^{k} + \sum_{i=1}^n \lambda^{k}_i\cdot e_k(x_i) + g'(e^{k+1}(x_1),\ldots,e^{k+1}(x_n))$,
where $g'$ is some $n$-ary function mapping $e^{k+1}D$ into $e_kD$.
Now, with a little bit more effort we can strenghten Lemma \ref{lemma31}
to represent every functions mapping simultaneously all upper levels
$e_{k+1}D,\ldots,e_hD$ (i.e. the entire $e^{k+1}D$ not just one level $e_\ell$ as in that Lemma)
into a lower level $e_kD$ by a polynomial of $\dd$. This would show that $g'$ and therefore $g$ are the polynomials.

\subsection{Lower bound\label{sect-lower}}
We are going to show that under the assumption of the ETH
the complexity for both $\csat{}$ and $\ceqv{}$ for $\dhh$
is at least $\Omega(2^{c\cdot \log^{h-1}\cs})$
where $\cs$ is the size of a circuit $\Gamma$ on the input.

To deal with $\csat{}$, for every formula $\Phi(x_1,\ldots,x_n)$ in 3-CNF
we will construct an $n$-ary polynomial $\po t_\Phi(x_1,\ldots,x_n)$
such that $\Phi$ is satisfiable iff the equation $\po t_\Phi(\o x)=e_11$ has a solution in $\dd$.
We will make sure that the time required to produce $\po t_\Phi$
is bounded by $O(2^{c m^{1/(h-1)}})$, where $m$ is the number of clauses in $\Phi$.
Now, having algorithms for $\csat{\dhh}$ working in time $O(2^{\epsi\cdot \log^{h-1}\cs})$
for arbitrary small $\epsi>0$
we would be able to solve 3-CNF-SAT in the very same time with $\cs$ replaced by
$d\cdot 2^{c m^{1/(h-1)}}$, i.e. in $O(2^{\epsi c^{h-1} m})$.
This obviously contradicts ETH (after remodelling it with the Sparsification Lemma).


To produce $\po t_\Phi$ we start with the $s$-ary functions
$\pand^s_k : e_{k+1}D^s \map e_kD$
defined by $\pand^s_k(a_1,\ldots,a_s)=0$ if at least of the $a_i$'s is $0$,
and $\pand^s_k(a_1,\ldots,a_s)=e_k1$ otherwise.
Lemma \ref{lemma31} assures us that all those functions can be realized by polynomials of $\dd$ in $O(2^{cs})$ time, possibly with different constants $c$ depending on $k$.

Although the functions $\pand^s_k$ are long, the composition of two consecutive ones is shorter
(in terms of the variables involved).
Indeed the function
\[
\pand^s_{k-1}
\left(
\pand^s_k(x_1,\ldots,x_s),
\ldots,\pand^s_k(x_{(s-1)s+1},\ldots,x_{s^2})
\right)
\]
acts from $e_{k+1}D^{s^2}$ into $e_{k-1}D$ and can be produced in
$O(s\cdot 2^{c_ks c_{k-1}s}) = O(2^{cs})$ time.
Repeating this procedure we end up with a $s^{h-2}$-ary polynomial $\pand$,
of size/time $O(2^{cs})$, mapping $e_{h-1}D^{s^{h-2}}$ into $e_1D$
and behaving as a conjunction, i.e.
$\pand(\o a)=0$ if some of the $a_i$'s is $0$, and $\pand(\o a)=e_11$ otherwise.

The above part of our construction has been independent of $\Phi$.
We are going to use the action of the level $e_hD$ onto $e_{h-1}D$ to code $\Phi$.
To start with we define a boolean function $\bb : e_hD \map \set{\top,\bot}$
by putting $\bb(a)=\top$ for all $a\neq0$ and $\bb(0)=\bot$.
Now, if $m$ is the number of clauses in $\Phi$
we fix $s$ to be $\lceil m^{1/(h-1)}\rceil$
and split the clauses into $s^{h-2}$ parts, say $\Phi_\ell$'s,
each of which containing at most $s$ clauses,
so that each $\Phi_\ell$ involves at most $n_\ell\leq 3s$ variables.
Again we refer to Lemma \ref{lemma31} to ensure that the function
$\pcnf_{\Phi_\ell} : e_hD^{n_\ell} \map e_{h-1}D$
given by
$\pcnf_{\Phi_\ell}(a_1,\ldots,a_{n_\ell})=0$ if $\Phi_\ell(\bb(a_1),\ldots,\bb(a_{n_\ell}))=\bot$
and $\pcnf_{\Phi_\ell}(a_1,\ldots,a_{n_\ell})=e_{h-1}1$ otherwise,
can be realized by a polynomial (again in time bounded by $O(2^{cs})$).
For simplicity we make sure that occurrence of each variable $x$ is replaced by $e_h(x)$.

Now, filling up our $s^{h-2}$-ary polynomial $\pand$ with $n_\ell$-ary polynomials $\pcnf_{\Phi_\ell}$'s we finally arrive at the polynomial $\po t_\Phi$.
Again we produced it in $O(2^{cs})$ time, for some (possibly new) constant $c$.
It should be obvious that $\po t_\Phi$ does the required job for us.

\medskip

To get a similar lower bound for  $\ceqv{\dhh}$ it suffices to notice that the constructed polynomial $\po t_\Phi$ takes only two values: $0$ and $e_11$.
In such a case $\ceqv{}$ and $\csat{}$ could be bisimulated.

\subsection{Deterministic upper bound\label{sect-duper}}
This subsection is devoted to analyze a solution space for an equation $\po t(\o x)=0$ over the algebra $\dd=\dhh$. This analysis is based on SESH (Strong Exponential Size Hypothesis).
This will lead to an algorithm that solves the equations (and therefore satisfiability of circuits)
over $\dd$ in subexponential time almost matching the lower bound from Subsection \ref{sect-lower}.

Recall here that in a superniloptent algebra $\m A$ an equation has a solution
if it has one which is almost constant, say equal to $0$,
i.e. the number of non-zero values for $x_1,\ldots,x_n$
is bounded by a constant depending only on the algebra $\m A$.
We are using the algebra $\dhh$ as a paradigm for $h$-step supernilpotent algebras
to show (under the assumption of SESH) that in such realm
if an equation $\po t(\o x)=0$ has a solution
then it has one which again is {\em almost} constant,
but this time {\em almost} means
that there are at most $O(\log^{h-1} \card{\po t})$ non-zero values.
Thus to check if $\po t(\o x)=0$ has a solution
it suffices to check if there is one among those {\em almost} constant tuples.
Since there are at most
$O(n^{\log^{h-1} \card{\po t}} \cdot \card{D}^{\log^{h-1} \card{\po t}})
= O(2^{c\log^{h} \card{\po t}})$ such candidates,
while checking if one is actually a solution takes roughly $O(\card{\po t})$,
we have an algorithm working in $O(2^{c\log^{h} \card{\po t}})$ time.

\medskip
For two tuples $\o a =(a_1,\ldots,a_n)$ and $\o b =(b_1,\ldots,b_n)$ from $D^n$
we put $\equa{\o a=\o b}=\set{i : a_i = b_i}$
and analogously for $\equa{\o a \neq\o b}$.

Now we will show how a solution $\o a$ of $\po t(\o x)=0$ can be successively modified,
by to get a sequence of solutions
\(
\o a= \o a^0 \rightarrow \o a^1 \rightarrow \ldots \rightarrow\o a^h.
\)
When passing from $\o a^{k-1}$ to $\o a^{k}$ we will introduce zeros on the $k$-th coordinate,
i.e. making $e_k(a_i)=0$, for more and more $i$'s, while keeping the other coordinates unchanged.
To be more precise we will make sure that
$\cardd{\equa{e_k(\o a^k) \neq 0}} \leq O(\log^{k-1}\card{\po t})$
and $e_j(\o a^k) = e_j(\o a ^{k-1})$ for $j\neq k$.
Thus we will get
\[
\cardd{\set{i: \sum_{j\leq k}e_j(a^k_i)\neq 0}} \leq
\sum_{j\leq k} O(\log^{j-1}\card{\po t}) = O(\log^{k-1}\card{\po t}),
\]
so that finally arriving at $\o a^h$ we end up with
$\cardd{\equa{\o a^h \neq 0}} \leq O(\log^{h-1}\card{\po t})$, as promised.

To keep our second invariant when passing from $\o a^{k-1}$ to $\o a^k$
we need to stay inside the set
\[
E^k =\set{\o b : e_j(\o b) = e_j(\o a^{k-1}) \mbox{\ for \ }j\neq k },
\]
in particular we secure $e^{k+1}(\o b)=e^{k+1}(\o a^{k-1})$
so that for $j>k$ we have $e_j\po t(\o b) = e_j\po t(\o a^{k-1})$.

In particular, when producing $\o a^1 \in E^1$ we need to take care only of
\(
e_1\po t(\o a^1) = c^1 +\sum_i \lambda^1_i e_1(a^1_i) + \po t'(e^2(\o a^1)).
\)
However $\po t'(e^2(\o b))$ gives the same value for all $\o b\in E^1$.
Thus our requirement that $\o a^1$ is still a solution reduces to the equation
$0=e_1\po t(\o a^1) = c' +  \sum_i \lambda^1_i e_1(a^1_i)$.
Therefore one can easily find a solution $\o a^1$ to this linear equation with at most one of the $a^1_i$'s being non-zero.

Also, when passing from $\o a^{k-1}$ to $\o a^k$ we choose $\o a^k \in E^k$ to be a solution to
$\po t(\o x)=0$ that maximizes the number of zeros for $e_k(a^k_1),\ldots,e_k(a^k_n)$.
If $\o a^k$ would still have too many non-zeros we will construct a relatively short polynomial
(of the arity corresponding to the number of those nonzeros) that behaves as conjunction and refer to SESH to get a contradiction.

We start this argument with a better understanding of solutions $\o b \in E^k$ to the equation
$\po t(\o x)=0$.
For such $\o b$ to be a solution reduces to the system of equations:
\begin{eqnarray*}
0&=&e_k\po t(\o b,)\\
0&=&e_{k-1}\po t(\o b),\\
& \vdots & \\
0&=&e_1\po t(\o b),
\end{eqnarray*}
where each $e_j\po t(\o b)$ is representen in its canonical form as in \eqref{canon}.
The last sum in the representation of $e_k\po t$ occurs only if $k\neq h$.
Actually this sum disappears independently of how big is $k$.
This is because this sum is constant on the set $E^k$.
Also the linear parts in all equations with $j<k$
are constant as $e_j(b_i)=e_j(a^{k-1}_i)$ for $\o b \in E^k$.
This also allows to replace $e^j(b_i)$ by $e_k(b_i)$.
By possibly modifying the constants $c^1,\ldots,c^k$
(and the sets $S^1,\ldots,S^{k-1}$ of polynomials)
we are left with finding $\o b \in E^k$ satisfying
\begin{equation}
\begin{aligned}\label{rj}
0&=& c^k &+ \sum_{i=1}^n \lambda^k_i\cdot e_k(b_i),
\\
0&=& c^{k-1} &+ \sum_{\po s \in S^{k-1}} \kappa_{\po s}^{k-1}\cdot v_{k-1}\po s(e_{k}(b_1),\ldots,e_{k}(b_n)),
\\
& \vdots &&
\\
0&=& c^{1} &+ \sum_{\po s \in S^{1}} \kappa_{\po s}^1\cdot v_{1}\po s(e_k(b_1),\ldots,e_k(b_n)).
\end{aligned}
\end{equation}
We want to replace this system of equations by a single equation (of about the same size).
We will do it with the help of the $(\card{S^1}+k-1)$-ary function
$\pv : e_2D^{\card{S^1}+k-1} \map e_1D$
defined (on the variables $z_{\po s}$ indexed by $\po s \in S^1$ and $z_2,\ldots,z_k$) by
\begin{equation*}
\begin{split}
\pv&(\ldots, z_{\po s},\ldots,z_2,\ldots,z_k) = \\
&\left(e_11-\left(c^1+\sum_{\po s \in S^1}
\kappa_{\po s}^1\cdot v_1(z_{\po s})\right)^{p_1-1}\right)\cdot \prod_{j=2}^{k} (e_11-v_1(z_j)).
\end{split}
\end{equation*}
Note that $\pv(\ldots, z_{\po s},\ldots,z_2,\ldots,z_k)=e_11$
iff all the $z_2,\ldots,z_k$ as well as $c^1+\sum_{\po s \in S^1} \kappa_{\po s}^1\cdot v_1(z_{\po s})$ are zeros.
Now, denoting by $\po r_j(\o b)$ the right hand side of the $j$-th equation (counting from the bottom) and substituting $v_2\ldots v_{j-1}r_j(\o b)$ for $z_j$
and $\po s(e_k(b_1),\ldots,e_k(b_n))$ for the $z_{\po s}$'s,
we reduced our system of equations to just one equation of the form $\pv(....)= e_11$,
where inside $\pv$ there are polynomials of $\dd$ with total length bounded by $O(\card{\po t})$.

Obviously, by Lemma \ref{lemma31}, $\pv$ can be represented by a polynomial of $\dd$.
However to have a control of its size we need a little bit more subtle argument.
First we distribute all the multiplications in $\pv$ to end up with a sum of a constant and expressions of the form $v_1(y_1)\cdot\ldots\cdot v_1(y_\ell)$, with $y_i$'s being the variables $z_j$'s or $z_{\po s}$'s.
It should be obvious that this sum has at most
$(1+\card{\po t}^{p_1-1}\cdot 2^{k-1})\leq O(\card{\po t}^{p_1})$ summands.
Moreover $\ell$ is bounded by a constant $(p_1-1)+(k-1)$ independent of $\po t$.
This allows us to call Lemma \ref{lemma31} to represent all the $\ell$-ary functions
$v_1(y_1)\cdot\ldots\cdot v_1(y_\ell)$ by polynomials of $\dd$
with lengths bounded by a constant independent of $\po t$.

Up to now, we end up with a polynomial $\po t^\star(\o x)$
of size $O(\card{\po t}^c)$ (for some constant $c$)
such that inside $E^k$ the equations $\po t(\o x)=0$ and $\po t^\star(\o x)=e_11$
have the same solutions.
Moreover, the shape of $\pv$ tells us that $\pv$ (and therefore $\po t^\star$)
takes only two values, namely $0$ and $e_11$
and therefore we will modify it to simulate the operation of conjunction
with entries from $\set{0,e_k1}$ and values $0, e_11$.
The fact that in the polynomials $\po r_j$ (and therefore in $\po t^\star$)
all variables $x_i$ are in the scope of $e_k$ will be helpful in our further analysis.

By our choice $\o a^k\in E^k$ is a solutions to $\po t^\star(\o x)=e_11$
minimizing the cardinality of the set $\equa{e_k(\o a^k)\neq 0}$.
Now we modify $\po t^\star$ to $\po t^\dstar$,
first by fixing each variable $x_i$ to be $a^k_i$ whenever $e_k(a^k_i)=0$
and then by replacing each of the remaining variables $x_i$ by $\lambda_i\cdot x_i$
where $\lambda_i$ is the unique nonzero coordinate of $e_k(a^k_i)$
(and as previously $\lambda \cdot x$ is the sum $x+\ldots+x$ with $\lambda$ summands).
Let $\ell$ be the arity of $\po t^\dstar$ so that without loss of generality we may assume that the first $\ell$ variables of $\po t^\star$ survived.
We claim that $\po t^\dstar$ is the required conjunction.
Indeed, $\po t^\dstar(e_k1,\ldots,e_k1)=\po t^\star(e_k(\o a^k))=t^\star(\o a^k)= e_11$,
while, by maximality of $\equa{e_k(\o a^k)=0}$,
a tuple $\o b\in D^\ell$ with $b_i=0$ for $i\leq \ell$
cannot be a solution to $\po t(\o x)=0$ so that $t^\dstar(\o b) =0$.

Now Fact \ref{fact-gates} allows us to create a circuit of size
$O(\card{\po t^\dstar})=O(\card{\po t}^d)$ and of depth $k$
that computes the $\ell$-ary conjunction.
However SESH tells us that the size of this circuit
has to be at least $\Omega(2^{c \ell^{1/(k-1)}})$.
This gives $\cardd{\equa{e_k(\o a^k)\neq 0}}=\ell\leq O(\log^{k-1}\card{\po t})$,
as required.

\bigskip
To see that $\ceqv{\dhh}$ can be solved roughly in the very same time,
note that determining if the identity $\po t(\o x)=0$ holds we need to check that none of the equations of the form $\po t(\o x)-d=0$, with $d\in D-\set{0}$ has a solution.

\subsection{Probabilistic upper bound\label{sect-puper}}
We present a randomized algorithm for checking whether an equation
$\po t(\o x)=0$ has a solution over $\dd$.
This time, again using SESH, we will show that if a polynomial $\po t$ returns some value $d\in D$,
i.e. $\po t^{-1}(d)\neq\emptyset$ then it actually returns this value many times, namely
\(
\card{\po t^{-1}(d)} \geq \Omega\left(\frac{\card{D}^n}{2^{c\log^{h-1}\card{\po t}}}\right).
\)
Thus, randomly choosing sufficiently many tuples from $D^n$,
say $\Omega\left({2^{c\log^{h-1}\card{\po t}}}\right)$ many of them,
with probability at least $1/2$ we will find a solution to $\po t(\o x) =d$,
if there is at least one.
This algorithm works then in time $O\left({2^{c\log^{h-1}\card{\po t}}}\right)$ which matches the complexity $\Omega\left({2^{c\log^{h-1}\card{\po t}}}\right)$ of the lower bound provided in Subsection \ref{sect-lower}, but possibly with a different constant $c$.

We start with observing that replacing $\po t(\o x)$ by the polynomial $\po t(\o x)-d$
we may assume that $d=0$.
Now starting with a single solution for the equation $\po t(\o x)=0$
we inductively create the sets $T^1,\ldots, T^h$ of solutions such that
$\card{T^k}\geq
\Omega\left(\frac{p_1^n\cdot \ldots\cdot p_k^n}{2^{c_k\log^{k-1}\card{\po t}}}\right)$.
It should be obvious that our final set $\card{T^h} \ci \po t^{-1}(0)$ witnesses that the size of $\po t^{-1}(0)$ is big enough.

We parameterise the sets $E^k$ defined in section \ref{sect-duper} by tuples $\o u \in D^n$
simply putting
\[
E^k(\o u)
= \set{\o b \in D^n : e_j(\o b) =e_j(\o u) \mbox{ \ for all \ } j\neq k}.
\]
Then, inside $E^k(\o u)$ we distinguish the subset
\[
E_0^k(\o a)= \set{\o b \in E^k(\o u) : \po t(\o b)=0}
\]
of solutions to our equation.
Then we fix one solution tuple $\o a \in \po t^{-1}(0)$
from which we will produce many other ones.
To do that we put $T^1= E_0^1(\o a)$
and $T^k=\bigcup_{\o u \in T^{k-1}} E_0^k(\o u)$.
It should be clear that any tuple in all $T^k$'s is a solution to our equation.
Thus, after showing that
$\card{E_0^k(\o u)} \geq \Omega\left(\frac{p_k^n}{2^{c_k\log^{k-1}\card{\po t}}}\right)$
we get that the size of $T^k$ is as big as promised,
so that we can conclude our proof.

Despite of our relativization of the $E^k$'s to the $E^k(\o u)$'s
(but keeping $\o u $ in the solution set $\po t^{-1}(0)$)
we still know that as long as $\o b \in E^k(\o u)$
the fact that $\po t(\o b)=0$ can be replaced (as previously)
by the system of only $k$ equations
$e_1\po t(\o b)=0,\ldots,e_k\po t(\o b)=0$,
where the normal forms $\po f_j$ for $e_j\po t$ reduce accordingly as in \eqref{rj}.
Thus to see that $\card{E_0^1(\o a)} \geq p_1^{n-1}$ note only that $E_0^1(\o a)$ consists of solutions to the linear equation
$0= c^1 +\sum_{i=1}^n  \lambda^1_i e_1(b_i)$.

Establishing the lower bound for $E_0^k(\o u)$ is more laborious.
We fix $\o u$ in $T^{k-1}$ (or more generally in $\po t^{-1}(0)$)
we repeat the procedure of section \ref{sect-duper} to produce
a relatively short (i.e. of size $O(\card{\po t}^{p_1})$)
polynomial
$\po t^\star_{\o u}(\o x)$ of $\dd$ that maps everything to only two values
$0,e_11$ and that depends only on $e_k(\o x)$,
and -- what is the most important -- has the property that over the set $E^k(\o u)$
the equations $\po t^\star_{\o u}(\o b)= e_11$ and $\po t(\o b)=0$ have exactly the same solutions.

As previously (in section \ref{sect-duper}) our goal is to rearrange polynomial $\po t^\star_{\o u}$
to a polynomial $\po t^\dstar_{\o u}$ that behaves on the set $\set{0, e_k1}$ like a conjunction and then apply SESH to the size of $\po t^\dstar_{\o u}$ to bound its arity.
On the way from $\po t^\star_{\o u}$ to $\po t^\dstar_{\o u}$ we create a polynomial
$\po t^\pstar_{\o u}$. To do that we refer to Lemma \ref{hyper}
(which is shown at the end of this section)
with $q=p_k$ and $Z=(\po t^\star_{\o u})^{-1}(e_11)\cap e_kD^n$
to get a hyperplane $H\ci e_kD^n$ of codimension $d\leq\log_{p_k}Z+p_k\log p_k$.
By Gauss elimination the set $\set{1,\ldots,n}$ can be split into two disjoint subsets $I,J$
with $\card{J}=d$ such that the hyperplane $H$ can be described by $d$ equations of the form
$x_j=\sum_{i\in I} \alpha^j_i x_i+\beta^j$, with the $\alpha$'s taken from $\m Z_{p_k}$,
while the $\beta$'s originally living in $\m Z_{p_k}$ are modified so that they are put into $e_kD$.
Now $\po t^\pstar_{\o u}$ is obtained from $\po t^\star_{\o u}$ by replacing $x_j$ with
$\sum_{i\in I} \alpha^j_i x_i+\beta^j$.
This slightly reduces the arity of $\po t^\pstar_{\o u}$ to be at least $n-\log_{p_k}Z-p_k\log p_k$
but $\card{\po t^\pstar_{\o u}}\leq O(n\cdot \card{\po t})\leq O(\card{\po t}^2)$.
However now the equation $\po t^\pstar_{\o u}(\o x)=e_11$ has exactly one solution
$\o b=(b_1,\ldots,b_{n-d})$, namely the one corresponding to the unique point in the intersection $Z\cap e_kD^n$.
To  make sure that $\po t^\dstar_{\o u}(x_1,\ldots,x_{x-d})$ behaves like a conjunction
we put
$\po t^\dstar_{\o u}(x_1,\ldots,x_{n-d})=
\po t^\dstar_{\o u}(x_1-e_k1+b_1,\ldots,x_{n-d}-e_k1+b_{n-d})$
and then turn $\po t^\dstar_{\o u}$ into a Boolean circuit of $(n-d)$-ary conjunction of size
$O(\card{\po t}^c)$ for some constant $c$.
This, by SESH gives that $\Omega\left(2^{c'\cdot (n-d)^{1/(k-1)}}\right)\leq O(\card{\po t}^c)$,
or in other words $n-c\log^{k-1}\card{\po t} \leq d$.
To conclude with our lower bound for $E^k(\o u)$ first note that this set fully corresponds to
$Z=(\po t^\star_{\o u})^{-1}(e_11)\cap e_kD^n$ so that  $\card{E^k_0(\o u)}=\card{Z}$.
Summing up we get
\[
n-c\log^{k-1}\card{\po t} \leq d \leq \log_{p_k}\card{Z}+c'=\log_{p_k}\card{E^k_0(\o u)}+c',
\]
and consequently
$\card{E^k_0(\o u)} \geq \Omega\left(\frac{p_k^n}{2^{c\log^{k-1}\card{\po t}}} \right)$,
as required.

\begin{lm}
\label{hyper}
For a non-empty subset $Z$ of the $n$-dimensional vector space $GF(q)^n$
there is an affine subspace $H$ of codimension at most $\log_q\card{Z}+q\log_2 q$
such that $\card{Z\cap H}=1$.
\end{lm}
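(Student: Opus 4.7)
The plan is a first-moment (probabilistic) argument: draw enough random linear constraints to isolate a single point of $Z$ in expectation. First, fix any base point $z_0\in Z$ and translate so that $z_0=0$; if $H'\ci GF(q)^n$ is a linear subspace with $H'\cap(Z-z_0)=\set{0}$, then the affine subspace $H=H'+z_0$ has $\card{H\cap Z}=1$ and the same codimension as $H'$. So it suffices to produce such an $H'$.

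For the main step, set $d=\lceil\log_q\card{Z}\rceil$ and draw linear functionals $\ell_1,\ldots,\ell_d$ independently and uniformly from the dual space $(GF(q)^n)^{*}$ (which has $q^n$ elements in total), letting $H'=\bigcap_{i=1}^d\ker\ell_i$. For any fixed nonzero $z\in GF(q)^n$ the evaluation map $\ell\mapsto\ell(z)$ is a surjective $GF(q)$-linear map onto $GF(q)$, so $\Pr[\ell(z)=0]=1/q$; by independence $\Pr[z\in H']=1/q^d$. Linearity of expectation then gives
\[
\mathbb{E}\bigl[\,\card{H'\cap(Z\setminus\set{0})}\,\bigr]\;=\;\frac{\card{Z}-1}{q^d}\;<\;1,
\]
since $q^d\geq\card{Z}$ by the choice of $d$. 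Because the random quantity on the left is a nonnegative integer, some realisation of the $\ell_i$'s gives $H'\cap(Z\setminus\set{0})=\emptyset$, whence $\card{H'\cap Z}=1$. The codimension of $H'$ equals $\dim\operatorname{span}(\ell_1,\ldots,\ell_d)$, which is at most $d$.

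Translating back produces an affine $H$ with $\card{H\cap Z}=1$ and codimension at most $d\leq\log_q\card{Z}+1\leq\log_q\card{Z}+q\log_2 q$, as $q\geq 2$ makes $q\log_2 q\geq 2$. I do not foresee a serious obstacle here: the step that needs care is the first-moment computation, and the slack $q\log_2 q$ in the stated bound comfortably absorbs the ceiling from $\lceil\log_q\card{Z}\rceil$. If the proof needs to be constructive -- as might be convenient for the algorithmic use elsewhere in the paper -- one can replace the random draw by a greedy choice of each $\ell_i$ via the method of conditional expectations.
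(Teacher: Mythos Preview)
Your argument is correct and in fact yields the sharper bound $\lceil\log_q\card{Z}\rceil\leq\log_q\card{Z}+1$, comfortably inside the stated $\log_q\card{Z}+q\log_2 q$. The translation to place a point of $Z$ at the origin, the computation $\Pr[\ell(z)=0]=1/q$ for nonzero $z$, and the first-moment bound $(\card{Z}-1)/q^d<1$ are all sound, and the codimension of $\bigcap_i\ker\ell_i$ is indeed at most $d$.

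The paper takes a different, fully constructive route: it iteratively intersects with affine hyperplanes in two phases. While $\card{Z}>q^{q-1}$ the set spans at least $q$ dimensions, so one can find a linear form taking all $q$ values on $Z$, and passing to the smallest nonempty fibre cuts $\card{Z}$ by a factor of $q$; once $\card{Z}\leq q^{q-1}$, one picks a coordinate on which $Z$ is nonconstant and intersects with the smallest nonempty coordinate hyperplane, cutting $\card{Z}$ at least in half. The two phases contribute $\log_q\card{Z}$ and $(q-1)\log_2 q$ hyperplanes respectively, which explains the additive $q\log_2 q$ term. Your probabilistic proof is shorter and gives a better constant; the paper's proof is explicitly algorithmic without appeal to derandomisation. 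Since the lemma is invoked in Section~\ref{sect-puper} only as an existence statement (the codimension bound feeds into an inequality, not an algorithm), your approach serves that application equally well.
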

\begin{proof}
We will successfully replace $Z$ by $Z\cap H$ where $H$ at the start is $GF(q)^n$.
A s long as $\card{Z} > q^{q-1}$ the set $Z$ has to contain at least $q$ linearly independent vectors, say $w_1,\ldots,w_q$. Now for a $q\times n$-matrix $W$ with rows $w_1,\ldots,w_q$
and the vector $a=(a_1,\ldots,a_q)\in GF(q)^q$ listing all elements of the field
the system of equations $W\cdot x = a$
has solutions, so that we pick one, say $[\alpha_1,\ldots,\alpha_n]$.
Consider $q$ hyperplanes determined by the equations of the form $\sum_i \alpha_ix_i=a_j$.
Note that each such hyperplane intersects $Z$, as $w_j$ belongs to such intersection.
Pick the one that leads to the intersection of the smallest size,
and replace $H$ by its intersection with this particular hyperplane.
Note that $Z\cap H$ has now at most $\frac{\card{Z}}{q}$ elements.

At some point we will arrive with $Z$ being too small to repeat this procedure.
So, if $\card{Z} \leq q^{q-1}$ but still $\card{Z} \geq 2$ we pick a coordinate
$i_0$ such that $Z$ contains at least two vectors that differ at this coordinate.
This time we consider all $q$ hyperplanes given by the equations $x_{i_0}=a_j$
and pick one that non-empty intersects $Z$ but this intersection is the smallest possible.
Replace $H$ with its intersection with this hyperplane.
Since that are at least two hiperplanes non-empty intersecting $Z$
we know that this time $\card{Z\cap H}\leq\frac{\card{Z}}{2}$.
\end{proof}

\section{The group case}
\label{sect-s4}

Both \csat{} and \ceqv{} are fully solved for groups.
The problems are polomial time solvable for nilpotent groups and \np/\conpc otherwise.
This is because a nilpotent groups are already supernilpotent.
However, as we have already mentioned, equations solving (not compressed by circuits)
may be still poly-time solvable -- this in fact is the case of the non-nilpotent group $\m S_3$. Actually, there are much more such examples \cite{foldvari-horvath}.
The smallest group for which the complexity is not known is the group $\m S_4$.
The method used in Section \ref{sect-example} can be almost directly applied
to provide an $\Omega(m^{c\log{m}})$ lower bound
for time complexity of solving equations (\polsat{})
and polynomials equivalence (\poleq{}), where $m$ is the size of the equation on input.

\begin{fact}\label{thm-s4}
The complexity of both \polsat{\m S_4} and \poleq{\m S_4} is $\Omega(m^{c\log{m}})$, where $m$ is the size of input (unless ETH fails).
\end{fact}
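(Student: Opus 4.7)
The plan is to port the lower-bound construction of Subsection \ref{sect-lower} from $\dpp{2,3,2}$ to $\m S_4$, working purely in the group signature. The chief series $\{e\}\lhd V_4\lhd A_4\lhd \m S_4$ has successive factors $V_4\cong(\mathbb Z_2)^2$, $A_4/V_4\cong\mathbb Z_3$, $\m S_4/A_4\cong\mathbb Z_2$, so reading primes bottom-up produces the alternating sequence $(2,3,2)$. Thus $\m S_4$ is to play the role of $\dpp{2,3,2}$ in the regime $h=3$, which is exactly the regime producing the target lower bound $\Omega(2^{c\log^{h-1}m}) = \Omega(m^{c\log m})$.

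The first step is to simulate the built-in unary level operations $e_j,v_j$ of $\dd$ by short group polynomials. Every commutator $[x,y]=x^{-1}y^{-1}xy$ lies in $A_4$ and every double commutator lies in $V_4$, while conjugation by a fixed element of $\m S_4\setminus A_4$ (respectively $A_4\setminus V_4$) acts non-trivially on $A_4/V_4$ (respectively on $V_4$). Using these actions one writes short group terms that isolate the three chief factors and serve as analogs of the lifts $v_j$. The second step is to prove a pure-group analog of Lemma~\ref{lemma31}: every function between two successive chief factors of $\m S_4$ is representable by a group polynomial of size $O(2^{cm})$ in its arity $m$. The idea is to tabulate the function input by input, using a product of short commutator-expressions each of which returns the identity except on one prescribed input tuple, where it returns a prescribed non-identity element of the target factor.

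With these two ingredients, the rest of Subsection \ref{sect-lower} transfers essentially verbatim: one builds $\pand^s_k$-type gadgets between the three chief factors, composes them into an $s$-ary conjunction polynomial $\pand$ of size $O(2^{cs})$, and inserts the $\pcnf_{\Phi_\ell}$ gadgets for chunks of size $s=\lceil m^{1/2}\rceil$ of a 3-CNF instance $\Phi$ with $m$ clauses. The resulting group polynomial over $\m S_4$ has size $O(2^{c\sqrt{m}})$ and, against a suitably chosen non-identity constant, is satisfiable iff $\Phi$ is. An algorithm for $\polsat{\m S_4}$ running in time $m^{o(\log m)}$ would then yield a $2^{o(m)}$ algorithm for 3-CNF-SAT, contradicting ETH after sparsification. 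The bound for $\poleq{\m S_4}$ follows by the standard finite-group bisimulation: $\po t\equiv e$ holds iff the equation $\po t(\o x)=g$ is unsatisfiable for every $g\in \m S_4\setminus\{e\}$.

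The principal obstacle is the pure-group analog of Lemma~\ref{lemma31}. In $\dpp{p_1,p_2,p_3}$ the explicit unary operations $e_j,v_j$ make single-tuple isolation almost painless, whereas in $\m S_4$ one must achieve the same effect using only multiplication, inversion and a handful of constants. The technical heart of the proof is therefore the construction, for each non-identity target $g$ in a given chief factor and each input tuple $\o a$, of a short group term $\chi_{g,\o a}(\o x)$ that evaluates to $g$ on $\o a$ and to the identity on every other tuple of the same profile; the remainder is a routine copy of the combinatorics of Subsection \ref{sect-lower}.
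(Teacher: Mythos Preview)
Your plan is correct in outline and would reach the same bound, but the paper's proof is shorter and more group-theoretic; it avoids almost all of what you flag as the principal obstacle.

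The paper never simulates $e_j,v_j$ and does not build a Lemma~\ref{lemma31} analog for the bottom level. For the role of $\pand^s_1$ (from $A_4/V_4$ down to $V_4$) it uses the iterated commutator
\[
\alpha^\circ(u,x_1,\ldots,x_s)=[[\ldots[u,x_1],\ldots],x_s],
\]
which already behaves like an $s$-ary conjunction: since $V_4$ is abelian, a single $x_j\in V_4$ collapses the chain to $1$, while $[V_4,a]=V_4$ for every $a\in A_4\setminus V_4$ makes $u\mapsto\alpha^\circ(u,\o x)$ a bijection of $V_4$ when all $x_j\in A_4\setminus V_4$. This is a \emph{linear}-size gadget, not an exponential one. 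The price is four extra variables $y_1,\ldots,y_4$: the seed $u$ is supplied as $[[y_1,y_2],[y_3,y_4]]\in V_4$, which can hit any element of $V_4$, so the equation $\po t_\Phi(\o y,\o x)=c$ for a fixed $c\in V_4\setminus\{1\}$ has a solution iff some assignment makes every inner polynomial land in $A_4\setminus V_4$. For the top level (your $\pcnf_{\Phi_\ell}$), the paper simply cites the existing $2^{O(s)}$-size $\m S_4$-polynomials from \cite{gorazd-krzacz:preprimal,ik:lics18} that encode a $3$-CNF chunk modulo $V_4$; it does not re-derive them.

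Your tabulation scheme would also work and is closer in spirit to the generic $\dhh$ argument; it buys uniformity (the same template would treat other groups of larger Fitting length) at the cost of re-proving the top-level construction and of building an exponential AND where a linear one suffices. One caution: literal simulation of $e_j$ is not possible, since $\m S_4$ is not a direct product and admits no polynomial projection onto a chief factor. What is actually required---and what the commutator trick delivers automatically---is that the gadgets take values in the correct normal subgroup and depend only on their inputs' cosets modulo the next subgroup up; your indicator terms $\chi_{g,\o a}$ must be engineered with this coset-invariance in mind.
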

\begin{proof}
Before we start with the proof we note that $\set{1}<\m V<\m A_4<\m S_4$ is the full sequence of normal subgroup of $\m S_4$, where $\m V\simeq \m Z_2\times \m Z_2$ is the Klein group
and $\m A_4$ is the alternating group.
They correspond to the  levels $e_1(\m D)$, $e_2(\m D)$ and $e_3(\m D)$ of the algebra $\dd$
from Section \ref{sect-example}.

Below we summarize a few simple observations about the structure of $\m S_4$
and its normal subgroups:
\begin{itemize}
 \item $\m S_4/\m V\simeq \m S_3$,
 \item $[\m S_4,\m S_4]=\m A_4$,
 \item $[\m A_4,\m A_4]=\m V$
 \item $[\m V,\m V]=1$.
 \item for every $a\in \m A_4\setminus \m V$ we have that $[\m V,a]=\m V$,
\end{itemize}

We will show lower bound for \polsat{\m S_4}.
The proof for \poleq{} is nearly the same.
Let $c\in\m V\setminus\set{1}$.
Analogously as in our construction in Section \ref{sect-lower}
we start with a 3-CNF formula $\Phi$ (with $m$ clauses)
we constructs $\po t_{\Phi}$
such that $\Phi$ is satisfiable
iff $\po t_{\Phi}(y_1,y_2,y_3,y_4,x_1,\ldots,x_n)=c$ has a solution.

The construction of $\po t_{\Phi}$ is split into two steps.
To imitate $\pand^s_1(x_1,\ldots,x_s)$ we will use the $(s+4)$-ary terms
\begin{equation*}
\begin{split}
\alpha_s(y_1,y_2,y_3,y_4,&x_1,\ldots,x_s)=\\
&\alpha^{\circ}([[y_1,y_2],[y_3,y_4]],x_1,\ldots,x_s),
\end{split}
\end{equation*}
 where $\alpha^{\circ}(y,x_1,\ldots,x_s)=[[\ldots[y,x_1],\ldots],x_s]$.
Note that, independently of how $y_1, y_2, y_3, y_4\in\m S_4$ are chosen
the value $[[y_1,y_2], [y_3,y_4]]$ is in $\m V$.
Moreover, any $d\in\m V$ can be realized as $[[y_1,y_2], [y_3,y_4]]$
for some $y_1,y_2,y_3,y_4\in\m S_4$.

Now we divide the clauses of $\Phi$ into $s$ parts,
each of which consist of at most $s$ clauses, say $\Phi_\ell$'s,
where $s\leq\lceil\sqrt{m}\rceil$.
We will imitate $\pcnf(\Phi_\ell)$ (with $n_\ell$ variables)
to code 3-CNF formula $\Phi_\ell$.
To do that we borrow (e.g. from \cite{gorazd-krzacz:preprimal} or from \cite{ik:lics18})
the polynomial $\po p_{\Phi_\ell}(x_1,\ldots,x_{n_\ell})$
(of exponential size in $n_\ell$) with range contained in $\m A_4$
whose behavior on each tuple $(x_1,\ldots,x_{n_\ell})\in \m S_4^{n_\ell}$ is,
modulo $\m V$, fully determined by the behavior of the $x_i$'s modulo $\m A_4$.
Namely $p_{\Phi_\ell}(x_1,\ldots,x_{n_\ell})\in\m V$ iff $\Phi(\bb(x_1),\ldots, \bb(x_n))=\perp$, where $\bb:S_4\mapsto \set{\perp, \top}$
is given by $\bb(x)=\top$ if $x\in\m A_4$ and $\bb(x)=\perp$ otherwise.
Now we put
$\po t_{\Phi}(\o y,\o x)$ to be
$\alpha_k(y_1,y_2,y_3,y_4,p_{\Phi_1}(\o x),\ldots,p_{\Phi_s}(\o x))$.

Suppose $t_{\phi}(\o y,\o x)=c$ for some $y$'s and $x$'s.
Indeed the fact that $[[y_1,y_2],[y_3,y_4]]\in\m V$
ensure us that none of the $p_{\Phi_\ell}(\o x)$'s might be $\m V$.
Consequently for all the  $\ell$'s we have
$\Phi_\ell(\bb(\o x))=\top$
so that $\Phi$ itself is satisfied while evaluated by $b(\o x)$.

Conversely, we translate a Boolean evaluation of the variables in $\Phi$ by the $z_i$'s,
to a corresponding evaluation of the $x_i$'s by elements of $\m S_4$
so that we chose $x_i\in\m A_4$ whenever $z_i=\top$,
and all the other $x_i$'s are chosen from outside $\m A_4$.
Obviously all the $\po p_{\Phi_\ell}$'s
are then put inside $\m A_4$ but outside $\m V$.
We are left with finding values for the $y_i$'s.
But, using the fact that for $[\m V,a]=\m V$ for any $a\in\m A_4-\m V$
and knowing that the $\po p_{\Phi_\ell}(\o x)$'s are in this difference,
we find  $u\in\m V$ so that
\[
\alpha^{\circ}(u,p_{\Phi_1}(\o x),\ldots,p_{\Phi_s}(\o x))=c.
\]
Now, this $u$ can be decomposed into $u=[[y_1,y_2],[y_3,y_4]]$ for some
$y_1,\ldots,y_4\in S_4$.

Finally we refer to ETH and argue like at the beginning of Section \ref{sect-lower}
to get the promised lower bound for equation solution in $\m S_4$.
\end{proof}

Solvable but not non-nilpotent gap in equation solving for groups
is open for about 20 years since Goldmann's and Russel's paper \cite{goldman-russell}.
Since then, a lot of effort has been put into finding new classes of solvable but non-nilpotent groups for which \polsat{} and \poleq{} are in \ptime
(e.g. \cite{horvath-szabo:groups}, \cite{horvath:metabelian}, \cite{foldvari-horvath}).
The group $\m S_4$ is now the first known example of a solvable but non-nilpotent group
for which probably do not exist polynomial time algorithms solving these problems.
Moreover, our method used in the proof of Fact \ref{thm-s4}
is quite general and can be used for showing lower bounds for other groups
or even for other solvable but non-nilpotent algebras from congruence modular varieties.

In fact, very recently Armin Wei{\ss} presented a proof
\cite{weiss-eth}
that under ETH neither \polsat{} nor \poleq{} can be in \ptime
for the solvable groups that are not 3-step supernilpotent
(or even not 2-step supernilpotent, but with an additional technical assumption).
Note here that the concept of the $h$-step supernilpotency in groups coincides with the
one of the Fitting length $h$.
Combining his and our efforts now we can remove this artificial technical assumption
and actually strengthen the lower bound to be read:
\\
{\em If $\m G$ is a finite solvable nonnilpotent group of Fitting length $h>2$
then both \polsat{G} and \poleq{G} require at least $O(2^{c \log^{h-1} m})$ steps,
where $m$ is the length of the polynomial(s) on input (unless ETH fails).}

\section{Conclusions}
\label{sect-concl}

We propose a couple of methods that are highly effective
in filling the nilpotent versus supernilpotent gap
for the problems  $\csat{}$ and $\ceqv{}$,
but with the help of two strong complexity hypothesis.
Our methods are particularly effective for $h$-step supernilpotent algebras for $h\geq 3$.
However these methods do not fully solve the problems for $2$-step supernilpotent algebras
(as they lead only a probabilistic upper bound, and this bound relies on SESH).

Since supernilpotent algebras do already
have polynomial time algorithms for both $\csat{}$ and $\ceqv{}$,
it seems that the $2$-step supernilpotent ones
form the natural next step to be attacked.
All the known to us examples of such algebras,
including the $\dpq$'s, lie on the polynomial side
(without any additional complexity hypothesis).
Moreover \cite{kkk} contains a proof that $\ceqv{}$ for $2$-nilpotent algebras is in \ptime.
As $2$-nilpotent algebras are $2$-step supernilpotent
this still leaves the hope that
the last ones also lie on the polynomial side.
Also, Theorem \ref{thm-puper} provides a polynomial randomized upper bound for $h=2$.
This makes our hope even stronger.

On the other hand we do hope that the boundary between tractable and hard algebras
is determined by this new measure of failure of the supernilpotency,
as there are examples \cite{ikk:strat}
of $3$-nilpotent but not $2$-nilpotent algebras with polynomially solvable
$\csat{}$ and $\ceqv{}$. They are $2$-step supernilpotent.

\medskip

In our second remark we note that both our algorithms can be parameterized
by the (lower) bound for the conjunction-like polynomials or $CC^0$-circuits.
If the lower bound provided by SESH is replaced by a computable but slower growing function $f(n)$
Then our method gives
\begin{itemize}
  \item a deterministic algorithm of complexity $O(n^{c\cdot f^{-1}(\card{\po t}^d)})$,
  \item a randomized algorithm of complexity $O(2^{c\cdot f^{-1}(\card{\po t}^d)})$,
\end{itemize}
where $\card{\po t}$ is the size of polynomial or a circuit,
$n$ is the number of variables (or input gates),
and $c,d$ are some constants.
This shows a very strong connections between the complexity of $\csat{}$ and $\ceqv{}$
and the size in which conjunctions can be expressed by $CC^0$-circuits (or polynomials).

In particular if $f(n)\geq 2^{cn}$ for some $c>0$
(which is true in $\dpq$, i.e. in the case of $h=2$)
then from what we said above the proof of Theorem \ref{thm-puper}
supplies us with a polynomial time randomized algorithms.
In the case $h=1$, i.e. for supernilpotent algebras, there is even no such function $f$,
as there is a bound for the arity of polynomials that expres conjunction-like behavior.
In this case we can slightly modify the method used in the proof of Theorem \ref{thm-puper}
to get linear algorithms for $\csat{}$ and $\ceqv{}$.
In particular (as nilpotent groups are supernilpotent) we get
a striking division between untractable (\np/\conpc) non-nilpotent groups
and the nilpotent ones that can be treated in probabilistic linear time \cite{kk-linear}.

The other feature provided by our proof of Theorem \ref{thm-puper}
tells us that a short polynomial splits its domain into rather large subsets
on which it is constant.
In particular it is not possible to separate, by polynomials,
not only single points (what is usually done by a conjunction-like function)
but even larger subsets in the big powers of the algebra.

\end{document}